\newcommand{\simplexC}{\mathcal{S}^{C}}
\newcommand{\ilr}{\operatorname{ilr}}
\newcommand{\ilrinv}{\operatorname{ilr}^{-1}}
\newcommand{\clr}{\operatorname{clr}}
\newcommand{\E}{\mathbb{E}}
\newcommand{\R}{\mathbb{R}}
\newcommand{\N}{\mathcal{N}}
\newcommand{\Dir}{\operatorname{Dirichlet}}
\theoremstyle{plain}
\newtheorem{theorem}{Theorem}
\newtheorem{proposition}[theorem]{Proposition}
\theoremstyle{definition}
\title{Directional-Shift Dirichlet ARMA Models for Compositional Time Series with Structural Break Intervention}
\author{
Harrison Katz\thanks{Corresponding author. Email: harrison.katz@airbnb.com} \\
\small Forecasting, Data Science, Airbnb
}
\date{June 2026}
\begin{document}

\maketitle

\begin{abstract}
Compositional time series, vectors of proportions summing to unity observed over time, frequently exhibit structural breaks due to external shocks, policy changes, or market disruptions. Standard methods either ignore such breaks or handle them through period-specific fixed effects that cannot extrapolate beyond the estimation sample, or through step-function dummies that extrapolate but impose instantaneous adjustment. We develop a Bayesian Dirichlet ARMA model augmented with a directional-shift intervention mechanism that captures structural breaks through three interpretable parameters: a unit direction vector specifying which components gain or lose share, an amplitude controlling the magnitude of redistribution, and a logistic gate governing the timing and speed of transition. The model preserves compositional constraints by construction, maintains innovation-form DARMA dynamics for short-run dependence, and produces coherent probabilistic forecasts during and after structural breaks. We establish that the directional-shift intervention trajectory corresponds to geodesic motion on the simplex and is invariant to the choice of ILR basis. A comprehensive simulation study with 400 fits across 8 scenarios demonstrates that when the shift direction is correctly identified (77.5\% of cases), amplitude is recovered with near-zero bias, timing shows small overall bias with modest $\kappa$-dependent variation, and credible intervals for the mean composition achieve nominal 80\% coverage; we address the sign identification challenge through a hemisphere constraint. Two supplementary simulation studies extend the analysis to extreme transition speeds and non-monotone DGPs, finding that calibration is robust across the full kappa range and degrades gracefully under partial reversibility. Two empirical applications to COVID-era Airbnb data characterize precisely what the model gains and loses relative to simpler alternatives. In the first application, to fee recognition lead-time distributions across four global regions where the break is monotone and ongoing, the intervention model achieves near-nominal calibration (79.6\% coverage) while the fixed effect model substantially under-covers (66.1\%), with comparable point accuracy. A paired full-matrix sensitivity benchmark, which holds the intervention structure fixed but replaces the diagonal AR/MA operators with unrestricted matrices, materially worsens pooled point accuracy and calibration at each reported forecast horizon, indicating that the diagonal specification acts as a useful regularizing approximation in the rolling-window setting. The second application, to stay-length compositions where the post-break dynamics are non-monotone, shows that the intervention model remains acceptably calibrated (85.7\%, modestly above nominal) while the fixed effect is slightly closer to the 80\% target (83.0\%); the fixed effect also outperforms on point accuracy in the region with the strongest reversion dynamics. Together these applications establish that the intervention model remains acceptably calibrated in both settings, that its relative calibration advantage over the fixed effect is specific to the monotone lead-time application, and that its point accuracy advantage is conditional on the transition being roughly monotone.

\medskip\noindent
\textbf{Keywords:} compositional time series; Dirichlet distribution; structural breaks; intervention analysis; Bayesian forecasting; smooth transition models
\end{abstract}

\section{Introduction}
\label{sec:introduction}

Compositional data, vectors of non-negative components constrained to sum to a fixed total, arise in numerous applications spanning economics, finance, ecology, geology, and the health sciences. When such compositions are observed repeatedly over time, they form compositional time series that present unique modeling challenges. The simplex constraint $\sum_{j=1}^C y_j = 1$ introduces perfect negative dependence among components, renders standard Gaussian assumptions inappropriate, and requires specialized methods that respect the geometric structure of the sample space \citep{aitchison1986, pawlowsky2015}.

The statistical analysis of compositional data has a rich history dating to \citet{aitchison1982}, who established that the simplex carries its own metric structure, the Aitchison geometry, and that log-ratio transformations provide a principled bridge to Euclidean methods. The additive log-ratio (ALR) and isometric log-ratio (ILR) transformations map the simplex to $\R^{C-1}$; the centered log-ratio (CLR) maps to the zero-sum subspace of $\R^C$, enabling the application of standard multivariate techniques while preserving compositional coherence \citep{egozcue2003}. This transformation approach underlies much of the compositional time series literature, including the transformed VARMA models of \citet{barcelovidal2011} and the state-space formulations of \citet{SNYDER2017502}.

An alternative modeling strategy works directly on the simplex by specifying a Dirichlet distribution for the compositions conditional on time-varying parameters. \citet{zheng2017} introduced the Dirichlet ARMA (DARMA) framework, which assumes $Y_t \sim \Dir(\alpha_t)$ with the parameter vector $\alpha_t$ evolving according to an ARMA-type recursion in log-ratio space. This approach was extended to full Bayesian inference by \citet{katz2024bdarma}, who developed the B-DARMA model with applications to lead-time forecasting. Subsequent work has examined shrinkage priors for high-dimensional specifications \citep{katz2025sensitivity}, energy portfolio forecasting \citep{katz2025energy}, time-varying precision via DARCH structures \citep{katz2025bayesiandirichletautoregressiveconditional}, and centered innovations for improved density forecasting \citep{katz2025centeredmadirichletarma}. Related simplex-respecting alternatives build on logistic-normal and latent Gaussian constructions, including logistic-normal models with Dirichlet covariance within the INLA framework \citep{MartinezMinayaRue2024} and scalable inference for multinomial logistic-normal dynamic linear models for count compositional time series \citep{SaxenaChenSilverman2025}.

A persistent challenge in applied compositional time series analysis is the occurrence of structural breaks, abrupt or gradual changes in the data-generating process induced by external shocks, policy interventions, or regime shifts \citep{hamilton1989, bai2003}. Methodologically, structural breaks are closely related to the broader change-point detection problem; \citet{TruongOudreVayatis2020} provides a modern survey of offline change-point detection methods. In proportional and compositional settings, recent work has proposed regime and change-point frameworks tailored to proportions \citep{FisherZhangColegateVanni2022} and exact segmentation algorithms for large compositional and categorical signals \citep{TruongRunge2024}. The COVID-19 pandemic provides a stark example in practice: booking patterns, consumer behavior, and economic activity shifted dramatically in early 2020, fundamentally altering the compositional structure of many business and economic series \citep{gossling2020, FILDES20221283}. In hospitality data, longitudinal evidence shows a pronounced shift toward shorter booking windows and related changes in booking behavior during the pandemic \citep{DeyaTortellaLeoniRamos2022}. Recent work also documents persistent distributional shifts in travel booking lead times \citep{katz2025leadtimes} and accommodation stay lengths \citep{katz2025slomads} that extend well beyond the initial shock. Standard DARMA models, which assume stationary dynamics, can struggle to capture such changes and may produce poor forecasts during transition periods.

The time series literature offers several approaches to structural breaks. Classical intervention analysis \citep{box1975} incorporates step functions or pulse indicators into regression models but requires pre-specification of break dates and functional forms. Smooth transition autoregressive (STAR) models \citep{terasvirta1994, vandijk2002} allow regime-dependent dynamics with continuous transition functions, and recent work continues to extend smooth transition ideas in multivariate settings, for example with Gaussian smooth transition VAR formulations \citep{LANNE2025105162}. Bayesian structural time series methods \citep{brodersen2015} accommodate breaks through state-space formulations but typically assume Gaussian observations. While recent work has made progress on detecting and modeling regime changes for proportions and compositional signals \citep{FisherZhangColegateVanni2022, TruongRunge2024}, there remains a need for forecasting-oriented intervention models that (i) preserve compositional constraints by construction and (ii) yield an interpretable decomposition of a break into direction, magnitude, and timing within a Dirichlet ARMA framework.

A related practical question, raised in the context of COVID-era hospitality data, is how to characterize the conditions under which monotone logistic-gate models provide calibration gains versus simpler fixed-effect alternatives. Accordingly, we present the proposed specification as a parsimonious forecasting model for persistent, approximately monotone structural breaks, and we use the simulation and empirical sections to delineate when that approximation improves calibration relative to a simpler fixed-effect alternative.

In this paper, we develop a directional-shift extension to the Dirichlet ARMA (DARMA) framework that provides a parsimonious and interpretable representation of structural breaks in compositional time series. Our approach introduces three key parameters: a \emph{direction vector} $\bm{v} \in \R^{C-1}$ with $\|\bm{v}\| = 1$ specifying the axis of compositional change in ILR space; an \emph{amplitude} $\Delta \in \R$ controlling the magnitude of the shift along this direction; and a \emph{logistic gate} $w_t(\tau, \kappa)$ governing when the transition occurs and how rapidly it unfolds.

This parameterization has several attractive properties. The direction-amplitude decomposition separates which components change from how much they change, facilitating interpretation and prior specification. The logistic gate embeds smooth transition dynamics \citep{luukkonen1988} within the compositional framework, allowing gradual rather than instantaneous breaks. The entire mechanism operates in ILR space, ensuring that forecasts remain on the simplex regardless of parameter values.

We establish that directional shifts correspond to geodesic motion on the simplex under Aitchison geometry, providing a geometric interpretation of the intervention as the shortest path between pre- and post-break compositions. The induced intervention trajectory on the simplex is basis-invariant: while we use a Helmert-style ILR contrast for computation, the simplex path implied by the intervention is independent of this choice, even though the full fitted model under diagonal AR/MA dynamics is not.

Our main contributions are fivefold. First, we develop a Bayesian Dirichlet ARMA model with directional-shift intervention (Section~\ref{sec:model}), with a logistic gate function that provides smooth transitions corresponding to geodesic motion on the simplex. Second, we conduct a comprehensive simulation study with 400 fits across 8 scenarios demonstrating accurate parameter recovery and proper calibration, supplemented by two additional simulation studies that extend the kappa range and examine partial reversibility (Section~\ref{sec:simulation}). Third, we apply the model to fee recognition lead-time distributions across four global regions during COVID-19 and directly benchmark the diagonal intervention specification against a dense full-matrix sensitivity variant to assess the practical cost of the diagonal restriction (Section~\ref{sec:empirical}). Fourth, we apply the model to stay-length compositions under harder conditions to characterize when the intervention model provides meaningful gains over simpler alternatives (Section~\ref{sec:empirical2}). Fifth, the two empirical applications together yield a practical decision rule grounded in both simulation evidence and real data.

The remainder of this paper is organized as follows. Section~\ref{sec:background} reviews compositional data analysis and the Dirichlet ARMA framework. Section~\ref{sec:model} develops our directional-shift model specification. Section~\ref{sec:priors} discusses prior specification and computation. Section~\ref{sec:simulation} presents simulation evidence on parameter recovery, including supplementary studies on transition speed and partial reversibility. Section~\ref{sec:forecasting} develops forecasting methodology. Section~\ref{sec:empirical} applies the model to COVID-period lead-time data. Section~\ref{sec:empirical2} presents the stay-length application. Section~\ref{sec:discussion} discusses practical implications and limitations. Section~\ref{sec:conclusion} concludes.

\section{Background: Compositional Time Series}
\label{sec:background}

A composition $Y = (y_1, \ldots, y_C)^\top \in \simplexC$ is a vector of non-negative components summing to unity. The simplex carries a natural geometry under the Aitchison distance $d_A(x,y) = \|\clr(x) - \clr(y)\|_2$, where the centered log-ratio is $\clr(Y) = \bigl(\ln(y_1/g(Y)), \ldots, \ln(y_C/g(Y))\bigr)^\top$ with $g(Y) = (\prod_j y_j)^{1/C}$ \citep{aitchison1986}.

The isometric log-ratio (ILR) transformation maps compositions to $\R^{C-1}$ via an orthonormal contrast matrix $\mathbf{V}$: $\ilr(Y) = \mathbf{V}^\top \clr(Y)$. This is an isometry between $(\simplexC, d_A)$ and $(\R^{C-1}, \|\cdot\|_2)$, enabling standard time series methods in transformed space \citep{egozcue2003}. We write $Z_t = \ilr(Y_t)$ and $Y_t = \ilrinv(Z_t)$.

The Bayesian Dirichlet ARMA (B-DARMA) model of \citet{katz2024bdarma} extends \citet{benjamin2003generalized} by assuming $Y_t \mid \mu_t, \lambda_t \sim \Dir(\lambda_t \mu_t)$, where the mean composition $\mu_t = \ilrinv(\eta_t)$ evolves according to ARMA dynamics in ILR space with working residuals $e_t = Z_t - \eta_t$: 
\begin{equation}
\eta_t = \bm{b} + \mathbf{B} X_t + \sum_{p=1}^P \mathbf{A}_p (Z_{t-p} - \bm{b} - \mathbf{B} X_{t-p}) + \sum_{q=1}^Q \mathbf{\Theta}_q e_{t-q},
\label{eq:bdarma_mean}
\end{equation}
with intercept $\bm{b}$, exogenous covariates $X_t$, AR matrices $\mathbf{A}_p$, and MA matrices $\mathbf{\Theta}_q$. The concentration $\lambda_t = \exp(X_{\phi,t}^\top \gamma)$ controls dispersion around the mean.

\section{Directional-Shift Intervention Model}
\label{sec:model}

We now extend the B-DARMA framework to accommodate structural breaks through a directional-shift intervention mechanism. The key idea is to decompose the break into a direction (which components change), an amplitude (how much they change), and a timing function (when the change occurs).

\subsection{Motivation and Overview}
\label{subsec:motivation}

Consider a compositional time series experiencing a structural break. Let $\ell$ denote the final pre-break time index, so the first post-break observation is $t = \ell + 1$. Before the break, the composition fluctuates around some baseline trajectory determined by trend, seasonality, and DARMA dynamics. After the break, the composition has shifted to a new equilibrium: some components have gained share while others have lost.

Standard approaches to this problem are unsatisfying. Period-specific fixed effects achieve perfect in-sample fit but cannot extrapolate since future period dummies are undefined by construction. Step-function dummies can extrapolate but assume instantaneous transition and, in compositional settings, require careful parameterization to maintain the sum constraint. Fitting separate models before and after the break discards information and provides no framework for forecasting during transitions.

Our directional-shift model addresses these limitations through a parsimonious parameterization that captures the break through interpretable parameters (direction, amplitude, timing), maintains compositional coherence in all forecasts, allows smooth rather than instantaneous transitions, and extrapolates the learned intervention to future periods.

\subsection{The Directional Shift Parameterization}
\label{subsec:directional_shift}

Let $\bm{v} \in \R^{C-1}$ be a unit vector, $\|\bm{v}\| = 1$, representing the direction of compositional change in ILR space. Let $\Delta \in \R$ be the amplitude of change along this direction. The directional shift modifies the mean structure as:
\begin{equation}
\eta_t = d_t + \sum_{p=1}^P \mathbf{A}_p (Z_{t-p} - d_{t-p}) + \sum_{q=1}^Q \mathbf{\Theta}_q e_{t-q},
\label{eq:intervention_mean}
\end{equation}
where the drift term $d_t$ now includes the intervention:
\begin{equation}
d_t = \bm{b} + \mathbf{B} X_t + \Delta \cdot w_t \cdot \bm{v}.
\label{eq:drift_intervention}
\end{equation}

The term $\Delta \cdot w_t \cdot \bm{v}$ represents a time-varying shift along the direction $\bm{v}$. When $w_t = 0$ (before the break), there is no shift; when $w_t = 1$ (after complete transition), the full shift $\Delta \cdot \bm{v}$ applies.

\subsection{The Logistic Gate Function}
\label{subsec:logistic_gate}

The gate function $w_t \in [0, 1]$ controls the timing and speed of the intervention. We adopt a piecewise logistic specification:
\begin{equation}
w_t(\tau, \kappa) = \begin{cases}
0 & t \leq \ell \\[4pt]
\displaystyle\frac{\sigma(\kappa(t - \tau)) - \sigma(\kappa(\ell - \tau))}{1 - \sigma(\kappa(\ell - \tau))} & t > \ell
\end{cases},
\label{eq:logistic_gate}
\end{equation}
where $\sigma(x) = 1/(1 + e^{-x})$ is the standard logistic function, $\ell$ is the last pre-break period, $\tau$ is a transition location parameter, and $\kappa > 0$ controls the transition speed. The normalization ensures $w_t = 0$ for $t \leq \ell$ regardless of $\tau$; $w_t \to 1$ as $t \to \infty$.

The parameter $\kappa$ governs how rapidly the transition unfolds. Small values (e.g., $\kappa = 0.3$) produce slow, gradual transitions over many periods, while large values (e.g., $\kappa = 3.0$) yield rapid transitions resembling a step function. This logistic gate embeds the smooth transition autoregressive (STAR) modeling philosophy \citep{terasvirta1994, vandijk2002} within the compositional framework. Unlike threshold models with discrete regime switches, our specification allows continuous adjustment, which is more realistic for gradual behavioral changes.

\subsection{Concentration Parameter Dynamics}
\label{subsec:concentration_dynamics}

Structural breaks often affect not only the level of compositions but also their variability. We allow the concentration parameter to shift in response to the intervention:
\begin{equation}
\log \lambda_t = X_{\phi,t}^\top \gamma + \delta_\phi \cdot w_t,
\label{eq:concentration_intervention}
\end{equation}
where $\delta_\phi \in \R$ captures the change in log-concentration. Positive $\delta_\phi$ indicates tighter concentration (lower variance) after the break; negative values indicate increased dispersion.

\subsection{Full Model Specification}
\label{subsec:full_model}

Combining the elements above, the complete directional-shift B-DARMA model is:

\begin{align}
Y_t \mid \mu_t, \lambda_t &\sim \Dir(\lambda_t \mu_t), \label{eq:full_obs} \\
\mu_t &= \ilrinv(\eta_t), \label{eq:full_mean} \\
\eta_t &= d_t + \sum_{p=1}^P \mathbf{A}_p (Z_{t-p} - d_{t-p}) + \sum_{q=1}^Q \mathbf{\Theta}_q e_{t-q}, \label{eq:full_dynamics} \\
d_t &= \bm{b} + \mathbf{B} X_t + \Delta \cdot w_t(\tau, \kappa) \cdot \bm{v}, \label{eq:full_drift} \\
\log \lambda_t &= X_{\phi,t}^\top \gamma + \delta_\phi \cdot w_t(\tau, \kappa), \label{eq:full_precision} \\
Z_t &= \ilr(Y_t), \quad e_t = Z_t - \eta_t. \label{eq:full_innovation}
\end{align}

The model parameters comprise regression coefficients $\bm{b}$ (intercept), $\mathbf{B}$ (mean covariates), and $\gamma$ (precision covariates); DARMA dynamics $\mathbf{A}_1, \ldots, \mathbf{A}_P$ (AR) and $\mathbf{\Theta}_1, \ldots, \mathbf{\Theta}_Q$ (MA), which in our implementation are restricted to diagonal matrices for parsimony and computational tractability so that each ILR dimension evolves independently given the drift term; and intervention parameters $\bm{v}$ (direction), $\Delta$ (amplitude), $\tau$ (location), $\kappa$ (speed), and $\delta_\phi$ (precision shift).

We treat $e_t$ as a working residual used to induce temporal dependence, following GLARMA-style constructions \citep{davis2003}, rather than as classical mean-zero innovations; the nonlinearity of the ILR transformation means $\E[Z_t \mid \mu_t, \lambda_t] \neq \eta_t$ in general. In implementation, with $M = \max(P,Q)$, we initialize the recursion by setting $\eta_t = d_t$ and $e_t = 0$ for $t \le M$, so AR and MA terms enter only once sufficient lagged observations are available; multi-step forecasts condition on the resulting filtered terminal states.

With diagonal $\mathbf{A}_p$, the intervention enters the drift term along $\bm{v}$; each ILR dimension then responds independently to its own past deviations around this drifting mean.

\section{Prior Specification and Computation}
\label{sec:priors}

\paragraph{Sign identification.} The factorization $\Delta \cdot \bm{v}$ introduces a sign ambiguity: $(\bm{v}, \Delta)$ and $(-\bm{v}, -\Delta)$ produce identical shifts. Only the product $\Delta \bm{v}$ is identified from the likelihood. We resolve this by constraining the first element of the raw direction vector to be positive: $v_1^{\text{raw}} > 0$. This implies $v_1 \geq 0$ after normalization. The hemisphere constraint is basis-dependent (changing the ILR contrast matrix changes which hemisphere is selected), but the induced intervention trajectory on the simplex remains invariant.

More generally, this invariance applies to the directional-shift mechanism itself, not to the full fitted model under diagonal AR/MA dynamics. Rotating the ILR basis changes which transformed directions are assumed to evolve independently, so an alternative contrast matrix defines a different diagonal dynamic approximation rather than a simple re-expression of the same stochastic specification.

\paragraph{Priors.} We adopt weakly informative priors summarized in Appendix Table~\ref{tab:priors}. The direction $\bm{v}$ is uniform on the hemisphere with $v_1 \geq 0$, implemented by constraining $v_1^{\text{raw}} > 0$ before normalization. The amplitude $\Delta \sim \N(0, 1.5^2)$ allows substantial shifts in either direction along $\bm{v}$. The location prior $\tau \sim \N(\ell + 2, 4^2)$ centers the transition a few periods after the known break date. The speed prior $\kappa \sim \text{LogNormal}(-0.5, 1^2)$ favors moderate transition rates while allowing both rapid and gradual adjustments. For the $P = Q = 1$ diagonal specification used throughout this paper, DARMA coefficients use $\text{Uniform}(-0.99, 0.99)$ to ensure stationarity (AR) and invertibility (MA).

\paragraph{Computation.} The model is implemented in Stan \citep{carpenter2017stan, stan2023} using the No-U-Turn Sampler \citep{hoffman2014} with 4 chains, 500 warmup, and 800 sampling iterations. Convergence is assessed via $\hat{R} < 1.01$ and absence of divergent transitions \citep{vehtari2021rhat, stan2023}.

\section{Simulation Study}
\label{sec:simulation}

We conduct a comprehensive simulation study to evaluate parameter recovery and calibration under controlled conditions, supplemented by two targeted studies addressing transition speed and partial reversibility.

\subsection{Simulation Design}
\label{subsec:simulation_design}

\paragraph{Data generating process.} We simulate compositional time series with $C = 5$ categories, $T = 120$ time points, and a structural break at $\ell = 60$. The DGP follows our model specification with a normalized linear time trend and two Fourier seasonal harmonics (at annual and semi-annual periods) in the mean, $P = Q = 1$ diagonal AR/MA dynamics with diagonal elements drawn from $\N(0, 0.25^2)$ for AR and $\N(0, 0.20^2)$ for MA, and base concentration $\lambda \approx 100$. For the intervention, the true direction $\bm{v}_{\text{true}}$ is drawn uniformly on the hemisphere satisfying $v_1 > 0$, with amplitude $\Delta_{\text{true}} \in \{-0.6, 0.6\}$, transition location $\tau_{\text{true}} = \ell + 2 = 62$, transition speed $\kappa_{\text{true}} \in \{0.5, 1.0\}$, and precision shift $\delta_\phi \in \{0, 0.3\}$.

\paragraph{Scenarios.} The main simulation study considers 8 scenarios crossing transition speed $\kappa \in \{0.5, 1.0\}$ (slow vs.\ fast), amplitude sign $\Delta \in \{-0.6, +0.6\}$ (negative vs.\ positive), and precision shift $\delta_\phi \in \{0, 0.3\}$ (no change vs.\ tightening), with 50 datasets per scenario (400 total). Two supplementary simulation studies, described in Sections~\ref{subsec:sim_kappa} and~\ref{subsec:sim_reversibility}, extend the kappa grid to cover extreme transition speeds and examine calibration under a non-monotone DGP respectively.

\paragraph{Estimation settings.} We use 4 chains with 500 warmup and 750 sampling iterations each (3000 posterior draws total). Convergence is assessed via $\hat{R}$ and divergence counts. Intervention priors for the simulation: $\Delta \sim \N(0, 1.5^2)$, $\tau \sim \N(\ell + 2, 3^2)$, $\kappa \sim \text{LogNormal}(0, 0.5^2)$, $\bm{v}$ uniform on the hemisphere, $\delta_\phi \sim \N(0, 0.5^2)$. All other priors match Appendix~\ref{app:priors}.

\subsection{Evaluation Metrics}
\label{subsec:evaluation_metrics}

We evaluate recovery of the intervention parameters using four metrics: direction recovery, measured by the cosine similarity $\cos(\hat{\bm{v}}, \bm{v}_{\text{true}}) = \hat{\bm{v}}^\top \bm{v}_{\text{true}}$ where values near 1 indicate correct recovery and values near 0 or below indicate poor directional alignment within the constrained hemisphere (since both $\hat{\bm{v}}$ and $\bm{v}_{\text{true}}$ satisfy $v_1 > 0$, a literal sign flip to the opposite hemisphere is not possible); amplitude bias $\hat{\Delta} - \Delta_{\text{true}}$ where $\hat{\Delta}$ is the posterior mean; timing bias $\hat{\tau} - \tau_{\text{true}}$ in months; and calibration, defined as componentwise 80\% credible interval coverage for $\bm{\mu}_t$ (the latent mean composition), averaged over time and components. This simulation coverage metric targets credible intervals for the latent mean composition; the empirical sections later report posterior predictive coverage for observed $Y_t$, so the two coverage quantities are not directly comparable.

We report results separately for cases with successful direction recovery (defined as $\cos(\hat{\bm{v}}, \bm{v}_{\text{true}}) > 0.5$) and direction failures.

\subsection{Results}
\label{subsec:simulation_results}

All 400 fits converged without divergent transitions. Table~\ref{tab:simulation} reports results conditional on successful direction recovery.

\begin{table}[htbp]
\centering
\caption{Simulation results conditional on direction recovery ($\cos(\hat{\bm{v}}, \bm{v}_{\text{true}}) > 0.5$, corresponding to angular error under $60^\circ$). Columns show transition speed ($\kappa$), amplitude ($\Delta$), precision shift ($\delta_\phi$), number of successful fits ($n$), amplitude bias, direction cosine, timing bias, and 80\% credible interval coverage for the mean composition $\bm{\mu}_t$.}
\label{tab:simulation}
\begin{tabular}{ccc|cccc|c}
\toprule
$\kappa$ & $\Delta$ & $\delta_\phi$ & $n$ & $\Delta$ Bias & $v$ Cosine & $\tau$ Bias & Coverage \\
\midrule
0.5 & $-$0.6 & 0.0 & 37 & 0.009 & 0.957 & 0.14 & 80.2\% \\
0.5 & $-$0.6 & 0.3 & 39 & 0.032 & 0.954 & 0.17 & 79.2\% \\
0.5 & 0.6 & 0.0 & 39 & $-$0.039 & 0.947 & 0.23 & 79.7\% \\
0.5 & 0.6 & 0.3 & 42 & $-$0.047 & 0.955 & 0.27 & 79.3\% \\
1.0 & $-$0.6 & 0.0 & 37 & 0.041 & 0.958 & $-$0.63 & 79.6\% \\
1.0 & $-$0.6 & 0.3 & 36 & 0.037 & 0.969 & $-$0.68 & 80.2\% \\
1.0 & 0.6 & 0.0 & 41 & $-$0.029 & 0.957 & $-$0.65 & 78.6\% \\
1.0 & 0.6 & 0.3 & 39 & $-$0.018 & 0.963 & $-$0.73 & 79.8\% \\
\midrule
\multicolumn{3}{c|}{\textbf{Overall}} & 310 & $-$0.003 & 0.957 & $-$0.23 & 79.6\% \\
\bottomrule
\end{tabular}
\end{table}

\paragraph{Direction recovery rate.} Across all scenarios, 310 of 400 fits (77.5\%) achieved successful direction recovery ($\cos > 0.5$). Unconditionally, the mean direction cosine is 0.71 and mean coverage is 79.5\%, essentially nominal calibration regardless of direction recovery success. Recovery rates were similar across transition speeds (78.5\% for $\kappa = 0.5$ vs.\ 76.5\% for $\kappa = 1.0$).

\paragraph{Amplitude and timing.} Conditional on correct direction, amplitude bias is negligible (mean $-0.003$, SD 0.130). Timing shows small overall bias (pooled mean $-0.23$ months; RMSE $\approx 2.2$ months), with modest $\kappa$-dependent bias: positive for $\kappa = 0.5$ and negative for $\kappa = 1.0$.

\paragraph{Direction failure cases.} In the 22.5\% of fits where direction recovery fails ($\cos < 0.5$), the amplitude parameter partially compensates: since both $\bm{v}_{\text{true}}$ and the fitted $\bm{v}$ are constrained to the same hemisphere ($v_1 > 0$), failure is poor directional alignment within that hemisphere rather than a sign flip to the opposite hemisphere. When $\Delta_{\text{true}} = +0.6$, mean signed amplitude bias in failure cases is $-0.692$; when $\Delta_{\text{true}} = -0.6$, it is $+0.771$. Failures are symmetric by sign and the pooled signed bias is near zero ($0.137$), but the compensating adjustments in $\Delta$ are partial rather than complete: mean absolute bias in failure cases is $0.737$, compared to $0.091$ in success cases. Crucially, componentwise coverage of $\bm{\mu}_t$ remains 79.5\% in direction-failure cases, essentially identical to the success-case coverage of 79.6\%. Even when $\bm{v}$ is poorly identified, the partial adjustment in $\Delta$ preserves well-calibrated credible intervals for $\bm{\mu}_t$. Direction failures affect the interpretability of $(\bm{v}, \Delta)$ but do not materially affect interval calibration for $\bm{\mu}_t$.

\paragraph{Calibration.} The 80\% posterior credible intervals achieve 79.6\% coverage, essentially nominal, demonstrating proper uncertainty quantification. Despite the hemisphere constraint, some fits exhibit poor directional alignment within the chosen hemisphere; these failures affect the interpretability of $(\bm{v}, \Delta)$ but do not materially affect interval calibration for $\bm{\mu}_t$. Informative priors based on domain knowledge can guide the posterior toward better directional recovery. Sections~\ref{subsec:sim_kappa} and~\ref{subsec:sim_reversibility} extend this analysis to extreme transition speeds and non-monotone DGPs respectively, finding that nominal calibration is maintained across the full kappa range and degrades gracefully under partial reversibility.

\subsection{Supplementary Study 1: Extended Transition Speed Scenarios}
\label{subsec:sim_kappa}

The main simulation uses $\kappa \in \{0.5, 1.0\}$, covering moderate transition speeds. Two questions arise at the extremes: at very small $\kappa$, does the approximately linear middle portion of the S-curve cause the model to conflate gradual structural breaks with long-term trends; and at very large $\kappa$, does a near-instantaneous transition make the gate parameters hard to distinguish from the fixed-effect alternative?

We expand the kappa grid to $\{0.1, 0.5, 1.0, 3.0\}$ with 25 datasets per cell ($2 \times 2 \times 4 = 16$ cells, 400 total fits). Table~\ref{tab:sim_kappa} reports results pooled over $\delta_\phi$.

\begin{table}[htbp]
\centering
\caption{Extended kappa simulation results (25 replicates per cell, pooled over $\delta_\phi \in \{0, 0.3\}$). All 400 fits converged without divergent transitions.}
\label{tab:sim_kappa}
\begin{tabular}{cc|ccccc}
\toprule
$\kappa$ & $\Delta$ & $\Delta$ Bias & $v$ Cosine & $\tau$ Bias & $\kappa$ Bias & Coverage \\
\midrule
\multirow{2}{*}{0.1} & $-$0.6 & 0.390 & 0.589 & 4.43 & 0.721 & 80.8\% \\
                     & $+$0.6 & $-$0.357 & 0.718 & 2.69 & 0.836 & 79.1\% \\
\addlinespace
\multirow{2}{*}{0.5} & $-$0.6 & 0.136 & 0.780 & 0.28 & 0.476 & 80.1\% \\
                     & $+$0.6 & $-$0.127 & 0.780 & 0.12 & 0.462 & 80.1\% \\
\addlinespace
\multirow{2}{*}{1.0} & $-$0.6 & 0.294 & 0.587 & $-$0.69 & 0.097 & 79.5\% \\
                     & $+$0.6 & $-$0.197 & 0.680 & $-$0.45 & 0.014 & 79.4\% \\
\addlinespace
\multirow{2}{*}{3.0} & $-$0.6 & 0.214 & 0.669 & $-$0.89 & $-$1.860 & 78.6\% \\
                     & $+$0.6 & $-$0.139 & 0.763 & $-$0.93 & $-$1.821 & 79.5\% \\
\bottomrule
\end{tabular}
\end{table}

Three patterns emerge. First, 80\% credible interval coverage is near-nominal across the entire kappa range (78.6--80.8\%), demonstrating that the model is well-calibrated even when transition speed is extreme. Second, at $\kappa = 0.1$ (very slow transitions), timing bias is large (2.7--4.4 months) and amplitude bias is elevated (0.36--0.39 in absolute value), because the middle of the S-curve is approximately linear and hard to distinguish from the linear trend covariate. The identification challenge is real but affects timing and amplitude precision rather than calibration. Third, at $\kappa = 3.0$ (near-instantaneous transitions), kappa itself is severely underestimated (bias $\approx -1.86$), because the data provide little information about transition speed once the gate rises to near 1 within a few periods. Despite this speed identification failure, amplitude bias is modest and coverage is nominal, indicating that the product $\Delta \cdot w_t$ is well-identified even when the speed parameter is not. Practitioners expecting very slow breaks should treat timing estimates with caution. For near-instantaneous breaks, the results here indicate that the gate speed parameter may be weakly identified; in such settings a simpler fixed-effect specification is a useful benchmark, though we do not directly compare the two models in this simulation design.

\subsection{Supplementary Study 2: Partial Reversibility}
\label{subsec:sim_reversibility}

The logistic gate is monotonically increasing by construction. We investigate the cost of this assumption when the true DGP includes a partial reversion after the initial break, simulating from a two-component gate:
\begin{equation}
w_t^{\text{rev}}(\gamma) = \max\bigl(0,\, w_t - \gamma \cdot w_t^{\text{rec}}\bigr),
\label{eq:reversible_gate}
\end{equation}
where $w_t$ is the standard logistic gate (Equation~\ref{eq:logistic_gate}), $w_t^{\text{rec}}$ is a second logistic gate anchored at $t_{\text{rec}} = 72$ (twelve periods after the break), and $\gamma \in \{0, 0.5, 0.8\}$ is the recovery fraction. The monotone intervention model is fitted to all three DGPs. We use 25 replicates per cell ($3 \times 2 \times 2 = 12$ cells, 300 total fits).

\begin{table}[htbp]
\centering
\caption{Reversibility simulation: monotone intervention model fitted to non-monotone DGPs. Results pooled over $\Delta \in \{-0.6, 0.6\}$ and $\delta_\phi \in \{0, 0.3\}$ (100 fits per $\gamma$ level). $\gamma = 0$ is the monotone sanity check; $\gamma = 0.5$ and $0.8$ represent 50\% and 80\% reversals.}
\label{tab:sim_reversibility}
\begin{tabular}{c|ccccc}
\toprule
$\gamma$ & $n$ & $v$ Cosine & $\Delta$ Bias & Aitchison Dist & Coverage (80\%) \\
\midrule
0.0 & 100 & 0.665 & 0.005 & 0.123 & 79.4\% \\
0.5 & 100 & 0.519 & $-$0.025 & 0.138 & 75.9\% \\
0.8 & 100 & 0.618 & 0.018 & 0.150 & 72.9\% \\
\bottomrule
\end{tabular}
\end{table}

Coverage degrades monotonically with the recovery fraction: 79.4\% at $\gamma=0$ (matching the main simulation), 75.9\% at $\gamma=0.5$, and 72.9\% at $\gamma=0.8$. The degradation is gradual: even under severe 80\% reversal, coverage remains at 72.9\%, approximately 91\% of the nominal 80\% target. Direction recovery remains positive at all levels, indicating that the initial shift direction is still identifiable even when the subsequent reversion partially obscures it. These bounds are informative for the stay-length empirical application (Section~\ref{sec:empirical2}), which features a partial reversion of the 28+ night share. Note that the simulation calibration metric (coverage of the latent mean $\bm{\mu}_t$) differs from the empirical metric (predictive interval coverage for observed $Y_t$); the comparison is therefore qualitative rather than direct. The empirical intervention model coverages of 79.6\%, 83.7\%, and 93.9\% across regions are qualitatively in the range one might expect under moderate to severe misspecification, though the comparison is not direct given the metric difference noted above.

\subsection{Computational Complexity}
\label{subsec:sim_complexity}

We characterize empirical sampling cost as a function of the number of components $C$, a practical concern given that applications may involve anywhere from 5 (simulation) to 10 (lead-time) or more categories. Table~\ref{tab:timing} reports mean wall-clock time across 3 independent runs per $C$ value, with $T = 120$, 4 chains, and 500 warmup plus 500 sampling iterations. Note that the main empirical fits use 800 sampling iterations and the simulation study uses 750; the benchmark uses 500 to reduce runtime while providing representative timing comparisons across $C$ values.

\begin{table}[htbp]
\centering
\caption{Empirical sampling time as a function of the number of compositional categories $C$. $D = C - 1$ is the ILR dimension. Results are means over 3 independent runs ($T = 120$, 4 chains, 500 warmup + 500 sampling iterations). The relative column is normalized to $C = 5$.}
\label{tab:timing}
\begin{tabular}{ccccc}
\toprule
$C$ & $D = C - 1$ & Mean time (sec) & SD (sec) & Relative to $C = 5$ \\
\midrule
 5 &  4 & 36.3 &  2.7 & 1.00 \\
 7 &  6 & 36.5 &  4.5 & 1.01 \\
10 &  9 & 57.9 & 15.4 & 1.60 \\
15 & 14 & 65.1 &  1.2 & 1.79 \\
\bottomrule
\end{tabular}
\end{table}

Scaling is substantially more favorable than the $O(C^2)$ worst-case implied by the ILR matrix operations. The near-identical times at $C = 5$ and $C = 7$ suggest that MCMC overhead dominates at small $C$, with ILR and DARMA recursion costs becoming relevant only from $C = 10$ onward. The 1.8$\times$ increase from $C = 5$ to $C = 15$ indicates that the model scales comfortably to the 10-component lead-time application in this paper and remains tractable for moderate expansions beyond that. Applications with $C \gtrsim 20$ would warrant further benchmarking; in such cases, the diagonal AR/MA restriction provides meaningful computational relief relative to full-matrix VARMA specifications.

\section{Forecasting}
\label{sec:forecasting}

For each posterior draw, we compute gate values $w_{T+h}$, propagate DARMA dynamics forward with interventions included, and draw from the predictive Dirichlet \citep{west1997}. Aggregating across draws yields point forecasts, intervals, and densities. We evaluate forecasts using: (1) Aitchison distance for point accuracy; (2) mean absolute error (MAE) for componentwise accuracy; (3) energy score in Aitchison geometry for probabilistic accuracy \citep{gneiting2007}; (4) plug-in log score for density forecasts; and (5) empirical coverage of posterior predictive intervals for observed compositions $Y_t$. See Appendix~\ref{app:metrics} for formal definitions and \citet{hyndman2021} for general forecasting principles. The plug-in log score should therefore be interpreted as a computationally convenient approximation rather than the full posterior predictive log score, because it evaluates the Dirichlet density at posterior mean parameters and does not integrate over parameter uncertainty.

\section{Empirical Application: Fee Recognition Lead Times During COVID-19}
\label{sec:empirical}

We apply the directional-shift model to monthly distributions of fee recognition lead times (the delay between booking and revenue recognition) during the COVID-19 pandemic across four global regions.

\subsection{Data and Context}
\label{subsec:data_context}

Travel and hospitality businesses earn fees when services are rendered, not when bookings are made. The distribution of lead times (how far in advance customers book) is a compositional time series that drives revenue forecasting and financial planning \citep{song2008, athanasopoulos2011}.

Our data comprises monthly lead-time distributions for four geographic regions: North America (NAMER), Europe/Middle East/Africa (EMEA), Latin America (LATAM), and Asia-Pacific (APAC), from January 2014 through January 2021. Each observation is a 10-part composition representing the proportion of bookings with lead times falling into monthly buckets: 0, 1, 2, 3, 4, 5, 6, 7, 8, and 9+ months. The multi-region design allows us to assess whether the intervention model's advantages generalize across markets with potentially different booking dynamics and COVID impact patterns.

The COVID-19 pandemic induced a dramatic structural break beginning March 2020, shifting booking behavior sharply toward shorter lead times. This setting provides an ideal test case: the break date is known, the expected direction of change is clear (toward shorter lead times), and the transition was neither instantaneous nor complete within our sample period. We set $\ell$ to February 2020, so March 2020 ($t = \ell + 1$) is the first post-break observation.

Figure~\ref{fig:data_heatmap} displays the lead-time composition over time as a heatmap for the North American market. The structural break in March 2020 is visually apparent, with the transition clearly gradual rather than instantaneous.

\begin{figure}[htbp]
\centering
\includegraphics[width=0.95\textwidth]{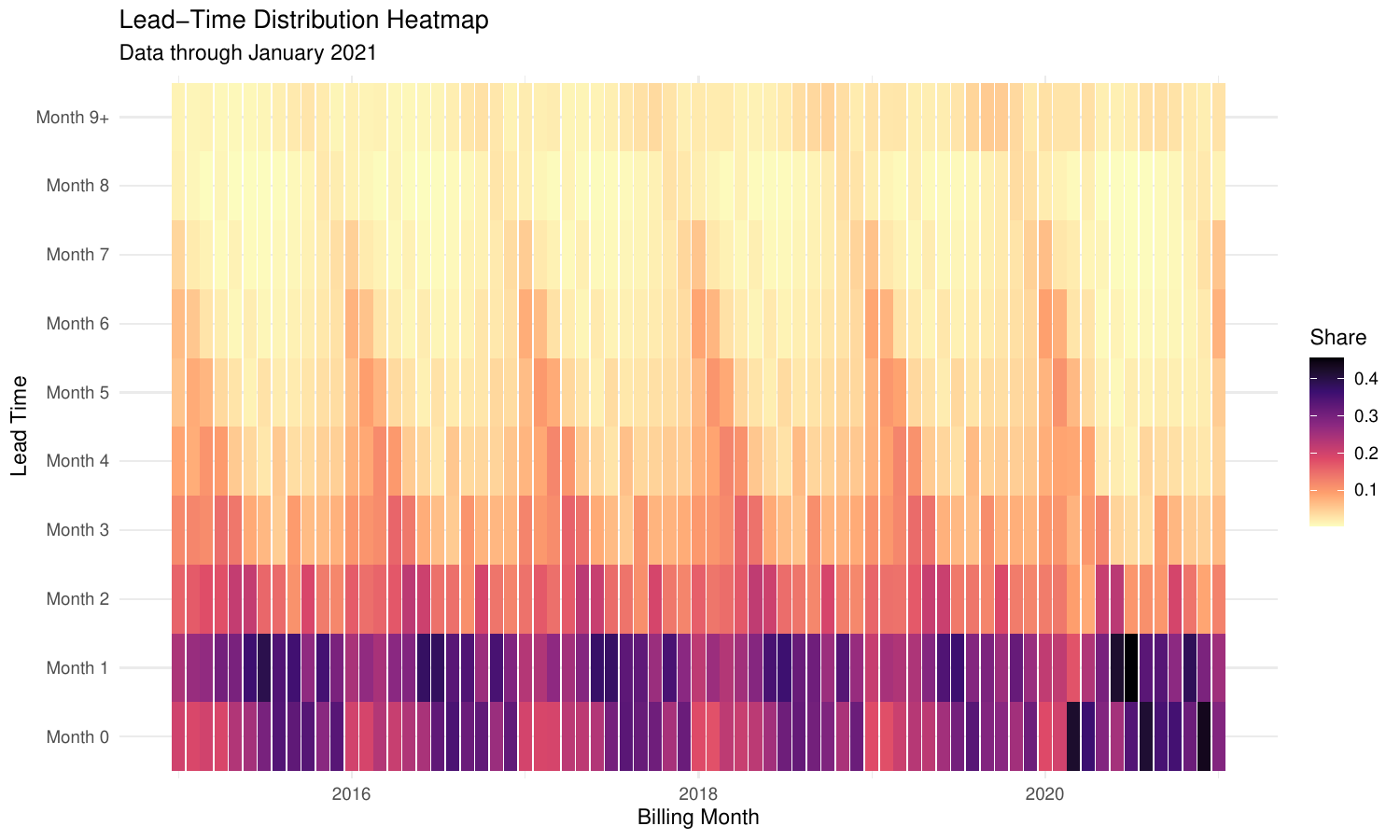}
\caption{Lead-time distribution over time (heatmap) for North America. Each column represents a month; each row represents a lead-time category (0 months at bottom, 9+ months at top). Darker shading indicates higher share. The COVID-19 structural break (March 2020) is clearly visible as a gradual shift toward shorter lead times. Similar patterns are observed across EMEA, LATAM, and APAC regions.}
\label{fig:data_heatmap}
\end{figure}

\subsection{Model Specifications}
\label{subsec:model_specs}

We compare three specifications:

\paragraph{Baseline model.} Standard B-DARMA with $P = Q = 1$ diagonal DARMA dynamics. The mean covariates $X_t$ include a linear time trend (normalized to $[0,1]$) and sine/cosine terms at 12-, 6-, and 4-month periods (three Fourier harmonics); the intercept is captured by $\bm{b}$. The precision model uses an intercept only ($X_{\phi,t} = 1$). No break mechanism.

\paragraph{Fixed effect model.} Baseline plus a post-COVID level shift. Specifically, we add a covariate $X_t^{\text{covid}} = \mathbf{1}(t > \ell)$ to the mean model with coefficient vector $\bm{\beta}_{\text{covid}} \in \R^{C-1}$, so the ILR-space mean becomes $\bm{\eta}_t = \bm{b} + \mathbf{B}X_t + \bm{\beta}_{\text{covid}} X_t^{\text{covid}}$. The precision model also includes a post-COVID shift: $\log \lambda_t = \gamma_0 + \delta_\phi^{\text{FE}} \cdot \mathbf{1}(t > \ell)$. The key distinction from our intervention model is that this approach assumes an \emph{instantaneous} step change at $t = \ell + 1$, whereas the intervention model captures \emph{gradual} transitions via the logistic gate.

\paragraph{Intervention model.} Baseline plus the directional-shift mechanism with logistic gate. Parameters $(\bm{v}, \Delta, \tau, \kappa, \delta_\phi)$ are estimated from the data.

All models use the same priors for shared parameters to ensure fair comparison. The mean covariate specification (linear trend plus three Fourier harmonics at annual, semi-annual, and quarterly periods) was selected to capture the dominant seasonal and trend variation in booking behavior. We deliberately excluded external economic indicators (travel demand indices, mobility data, policy stringency measures) from the covariate set: in the context of this paper, isolating the pure effect of the structural break is the goal of the intervention mechanism itself, and including concurrent economic covariates that are themselves COVID-affected would conflate the break signal with its drivers. Practitioners with access to leading indicators that are not themselves disrupted by the break may benefit from richer covariate specifications.

\subsection{Rolling Forecast Evaluation}
\label{subsec:rolling_eval}

To assess forecasting performance during the structural break, we conduct a rolling forecast evaluation across all four regions. For each forecast origin from July 2020 through January 2021 (7 origins per region, 28 total), we fit all three models using data up to but not including the origin month, generate $h$-step-ahead forecasts, and compare forecasts to observed values using our evaluation metrics (Appendix~\ref{app:metrics}).

This design tests each model's ability to forecast through an ongoing structural break, exactly the situation where the intervention model should excel.

\subsection{Results}
\label{subsec:empirical_results}

Table~\ref{tab:empirical_results_by_region} presents the rolling 1-step-ahead evaluation results separately for each region.

\begin{table}[htbp]
\centering
\caption{Rolling 1-step-ahead forecast evaluation by region, July 2020--January 2021 ($n=7$ origins per region). Lower is better for Aitchison distance, energy score, and MAE. Higher is better for plug-in log score. Coverage is componentwise, averaged across components; nominal is 80\%.}
\label{tab:empirical_results_by_region}
\begin{tabular}{llccccc}
\toprule
Region & Model & Aitchison & Energy & Plug-in & MAE & Coverage \\
 &  & Distance & Score & Log Score & & (80\%) \\
\midrule
\multirow{3}{*}{NAMER}
 & Baseline     & 1.212 & 0.801 & 18.2 & 0.0231 & 52.9\% \\
 & Fixed Effect & 0.915 & 0.621 & 21.9 & 0.0186 & 67.1\% \\
 & Intervention & \textbf{0.913} & \textbf{0.623} & \textbf{26.5} & \textbf{0.0167} & \textbf{78.6\%} \\
\addlinespace
\multirow{3}{*}{EMEA}
 & Baseline     & 1.203 & 0.788 & 18.5 & 0.0229 & 57.1\% \\
 & Fixed Effect & 0.934 & 0.645 & 21.8 & 0.0183 & 64.3\% \\
 & Intervention & \textbf{0.931} & \textbf{0.634} & \textbf{26.4} & \textbf{0.0174} & \textbf{78.6\%} \\
\addlinespace
\multirow{3}{*}{LATAM}
 & Baseline     & 1.212 & 0.797 & 18.1 & 0.0231 & 52.9\% \\
 & Fixed Effect & 0.996 & 0.628 & 18.2 & 0.0197 & 68.6\% \\
 & Intervention & \textbf{0.903} & \textbf{0.610} & \textbf{26.3} & \textbf{0.0174} & \textbf{80.0\%} \\
\addlinespace
\multirow{3}{*}{APAC}
 & Baseline     & 1.207 & 0.797 & 18.3 & 0.0229 & 51.4\% \\
 & Fixed Effect & \textbf{0.884} & \textbf{0.598} & 22.1 & 0.0184 & 64.3\% \\
 & Intervention & 0.937 & 0.625 & \textbf{26.3} & \textbf{0.0175} & \textbf{81.4\%} \\
\bottomrule
\end{tabular}
\end{table}

Table~\ref{tab:empirical_results_pooled} presents the pooled results across all four regions.

\begin{table}[htbp]
\centering
\caption{Pooled rolling 1-step-ahead forecast evaluation across all four regions, July 2020--January 2021 ($n=28$ origins total).}
\label{tab:empirical_results_pooled}
\begin{tabular}{lccccc}
\toprule
Model & Aitchison & Energy & Plug-in & MAE & Coverage \\
 & Distance & Score & Log Score & & (80\%) \\
\midrule
Baseline     & 1.209 & 0.796 & 18.3 & 0.0230 & 53.6\% \\
Fixed Effect & 0.932 & 0.623 & 21.0 & 0.0188 & 66.1\% \\
Intervention & \textbf{0.921} & \textbf{0.623} & \textbf{26.4} & \textbf{0.0173} & \textbf{79.6\%} \\
\bottomrule
\end{tabular}
\end{table}

\paragraph{Point forecast accuracy.} Both the intervention and fixed effect models achieve similar point accuracy, with mean Aitchison distances of 0.921 and 0.932 respectively, both representing substantial improvements over the baseline (1.209). The intervention model shows slightly better point accuracy in three of four regions (NAMER, EMEA, LATAM), while the fixed effect model performs better in APAC.

\paragraph{Probabilistic forecast quality.} The intervention model achieves a substantially higher plug-in log score (26.4 vs.\ 21.0 pooled), consistent in direction across all four regions. The energy scores are similar between models, as expected given comparable point accuracy.

\paragraph{Calibration.} The intervention model achieves near-nominal 80\% coverage across all four regions (78.6--81.4\%, pooled 79.6\%). In contrast, the fixed effect model shows substantial under-coverage (64.3--68.6\%, pooled 66.1\%), and the baseline model is severely miscalibrated (pooled 53.6\%). The calibration gap reflects the fixed effect model treating the break as instantaneous: on the first post-break observation, the full shift is applied, which misspecifies the gradual transition that actually occurred and produces intervals that are systematically too narrow during the adjustment period. The intervention model captures gradual transition dynamics via the logistic gate, propagating uncertainty about the transition trajectory into the forecast intervals.

\begin{figure}[htbp]
\centering
\includegraphics[width=0.85\textwidth]{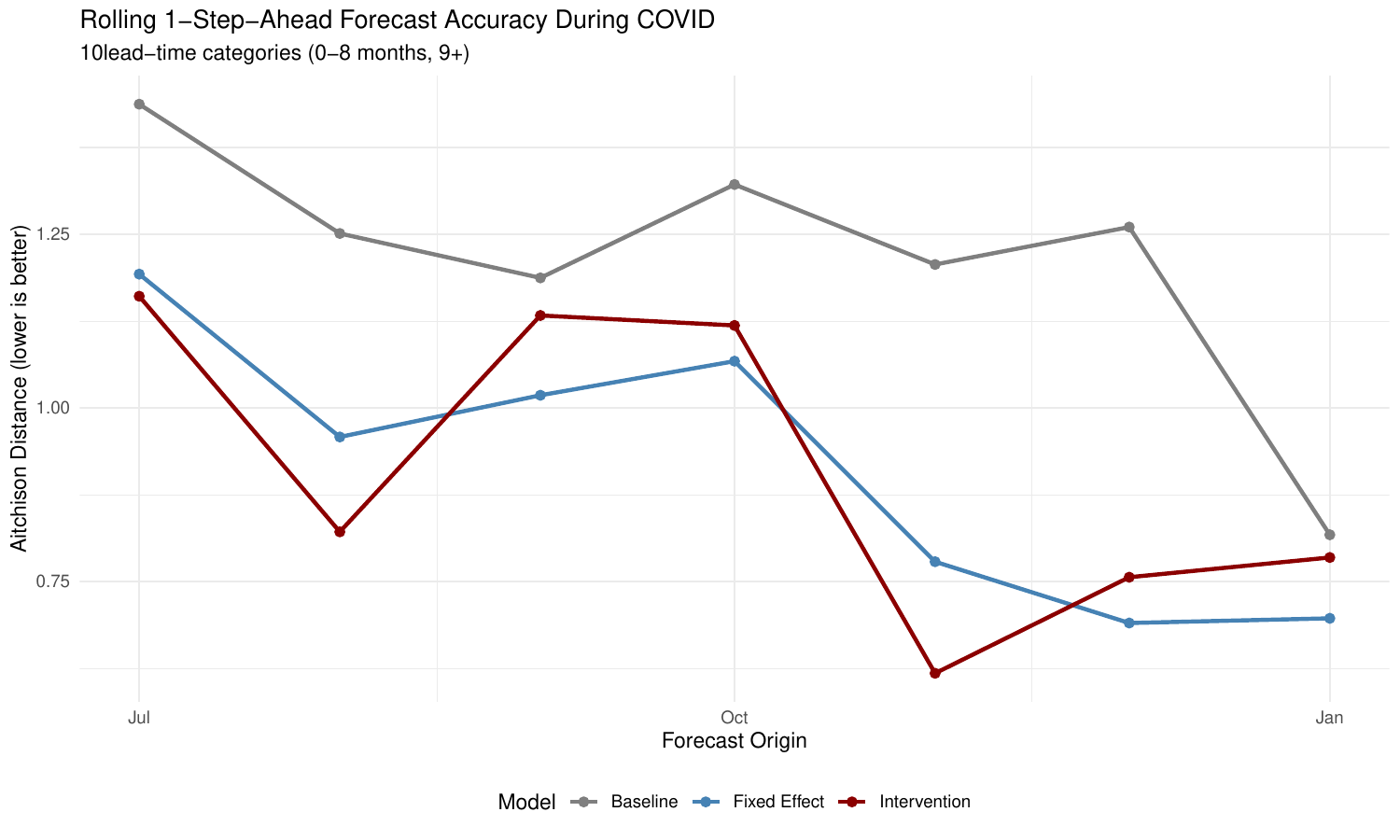}
\caption{Rolling 1-step-ahead Aitchison distance by forecast origin for North America. Both break-aware models show substantial improvement over baseline; the intervention model's primary advantage in uncertainty quantification emerges in coverage statistics (Table~\ref{tab:empirical_results_by_region}).}
\label{fig:rolling_comparison}
\end{figure}

\begin{figure}[htbp]
\centering
\includegraphics[width=0.95\textwidth]{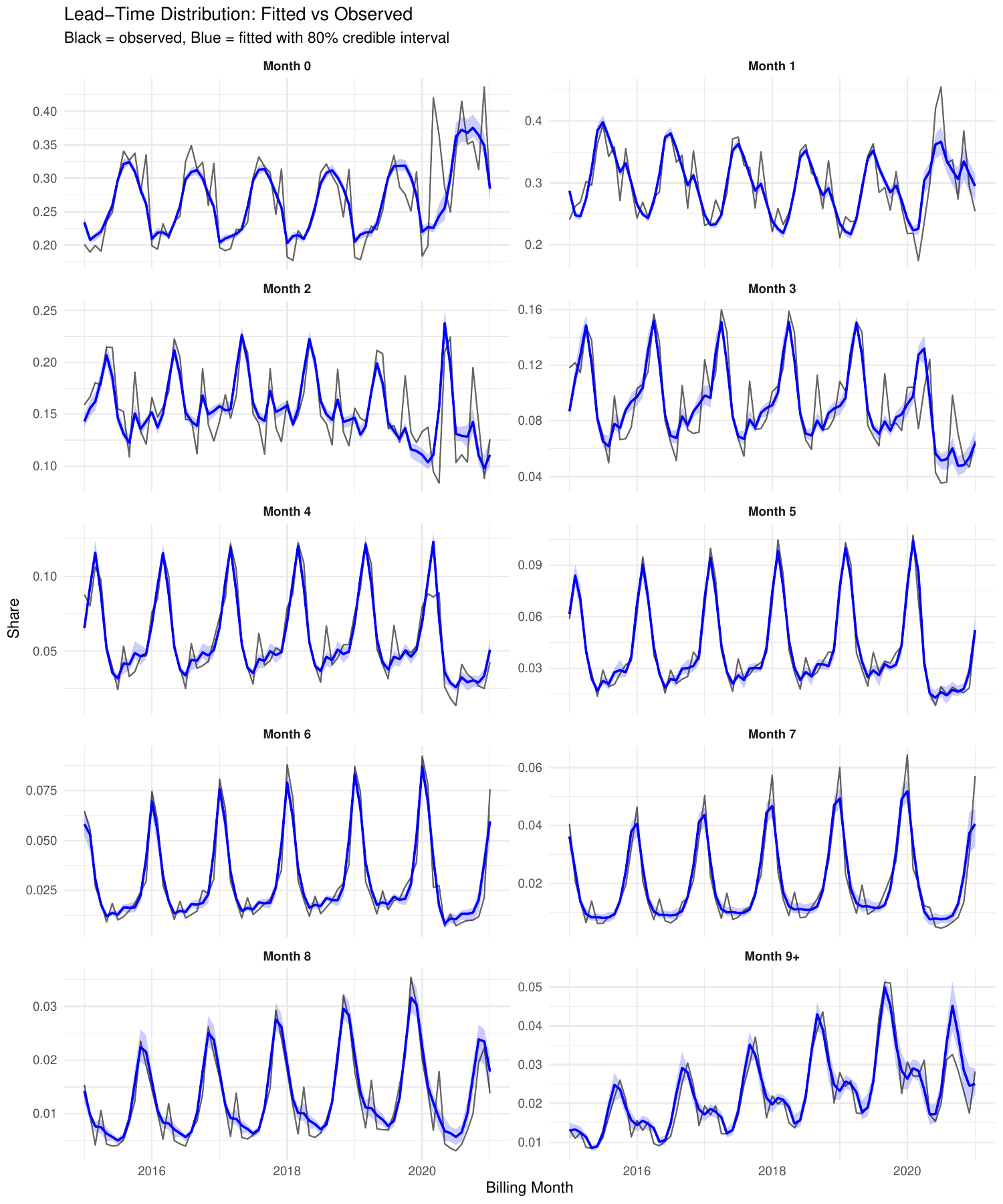}
\caption{In-sample model fit for North America: posterior mean of $\bm{\mu}_t$ (blue) versus observed lead-time shares (black) for all categories. Shaded bands show 80\% credible intervals. The intervention model captures both the pre-COVID seasonal patterns and the gradual structural shift beginning March 2020.}
\label{fig:fitted}
\end{figure}

\subsection{Residual Cross-Correlation Diagnostic}
\label{subsec:crosscorr}

The diagonal AR/MA restriction is the most consequential structural assumption in our implementation: it asserts that each ILR coordinate evolves independently given the drift term. Because components must sum to unity, cross-component dynamics are structurally embedded in the data, and the diagonal assumption may leave non-trivial dependence in the residuals. We assess this using fitted-value ILR residuals $\hat{r}^{\mathrm{fit}}_{t,d} = \ilr(Y_t)_d - \ilr(\hat{\bm{\mu}}_t)_d$ for the full in-sample fit of the intervention model in each region, where $\hat{\bm{\mu}}_t$ denotes the posterior mean fitted composition. These diagnostic residuals are not identical to the working residuals $e_t = Z_t - \eta_t$ used in the DARMA recursion: because $\E[Z_t \mid \mu_t, \lambda_t] \neq \eta_t$ in general, they should be interpreted as a fitted-value approximation in ILR space.

Table~\ref{tab:crosscorr} reports the results. Mean absolute pairwise cross-correlation ranges from 0.165 (APAC) to 0.186 (LATAM), with maximum values reaching 0.52. Ljung-Box portmanteau tests on the cross-product series are significant at $p < 0.05$ in 53--61\% of ILR dimension pairs across regions.

\begin{table}[htbp]
\centering
\caption{Fitted-value ILR residual cross-correlation diagnostic for the intervention model, lead-time application. Each region has $C = 10$ components and $D = 9$ ILR dimensions, yielding 36 pairs. Mean and maximum absolute Pearson correlation at lag 0; percent of pairs with significant Ljung-Box test ($p < 0.05$, lag 6).}
\label{tab:crosscorr}
\begin{tabular}{lcccc}
\toprule
Region & $n$ pairs & Mean $|r|$ & Max $|r|$ & \% Significant \\
\midrule
NAMER & 36 & 0.177 & 0.442 & 58.3\% \\
EMEA  & 36 & 0.185 & 0.495 & 61.1\% \\
LATAM & 36 & 0.186 & 0.522 & 55.6\% \\
APAC  & 36 & 0.165 & 0.484 & 52.8\% \\
\bottomrule
\end{tabular}
\end{table}

These results confirm that non-trivial cross-component residual dependence remains and that the diagonal assumption is not innocuous in this application. The observed residual cross-correlations can arise from both the structural simplex constraint, which induces dependence among components by construction, and genuine dynamic cross-component effects not captured by the diagonal parameterization. The practical consequence of that residual dependence is assessed directly in the full-matrix sensitivity benchmark of Section~\ref{subsec:fullvar_benchmark}.

\subsection{Full-Matrix Sensitivity Benchmark}
\label{subsec:fullvar_benchmark}

The residual diagnostic in Section~\ref{subsec:crosscorr} shows that the diagonal restriction is not innocuous, but it does not establish whether a denser dynamic specification improves forecasting. To assess the practical cost of the restriction directly, we fit a sensitivity benchmark in the lead-time application that retains the same Dirichlet likelihood, covariate specification, and rank-1 intervention $\Delta \cdot w_t \cdot \bm{v}$, but replaces the diagonal AR and MA operators with unrestricted $D \times D$ matrices in ILR space. To ensure a like-for-like comparison, we reran the diagonal intervention model under the same rolling-window pipeline used for the benchmark; the rerun reproduces the pooled 1-step results in Table~\ref{tab:empirical_results_pooled}.

\begin{table}[htbp]
\centering
\caption{Sensitivity benchmark for the lead-time application: pooled forecast evaluation for the diagonal intervention model and a dense full-matrix intervention benchmark. Both models use the same Dirichlet likelihood, covariates, and rank-1 directional-shift intervention; the only difference is the AR/MA operator structure. Diagonal results reproduce Tables~\ref{tab:empirical_results_pooled} and~\ref{tab:horizon_results}. Lower is better for Aitchison distance and MAE; coverage nominal is 80\%.}
\label{tab:fullvar_benchmark}
\begin{tabular}{llccc}
\toprule
Horizon & Model & Aitchison Dist & MAE & Coverage \\
\midrule
$h=1$  & Diagonal    & \textbf{0.921} & \textbf{0.0173} & \textbf{79.6\%} \\
(n=28) & Full matrix & 1.461          & 0.0289          & 44.6\%          \\
\addlinespace
$h=3$  & Diagonal    & \textbf{0.947} & \textbf{0.0184} & \textbf{75.0\%} \\
(n=20) & Full matrix & 1.355          & 0.0295          & 46.0\%          \\
\addlinespace
$h=6$  & Diagonal    & \textbf{0.768} & \textbf{0.0204} & \textbf{75.0\%} \\
(n=8)  & Full matrix & 1.174          & 0.0286          & 56.3\%          \\
\bottomrule
\end{tabular}
\end{table}

The dense full-matrix benchmark performs materially worse at every forecast horizon.
Pooled 1-step-ahead Aitchison distance deteriorates from 0.921 to 1.461, MAE from
0.0173 to 0.0289, and coverage from 79.6\% to 44.6\%. The same ranking holds at
$h=3$ and $h=6$, with the full-matrix coverage remaining below 60\% throughout while
the diagonal model stays within a few percentage points of nominal. These results do
not imply that cross-component dynamics are absent. Rather, they show that naively
estimating unrestricted AR/MA operators in these rolling windows produces a poor
bias-variance tradeoff. With $D=9$ ILR dimensions and $P=Q=1$, the full-matrix
specification adds 144 free parameters relative to the diagonal, far more than the
post-break sample can support. In the present application, the diagonal specification
functions as a useful regularizing approximation for forecasting.

\subsection{Forecast Horizon Analysis}
\label{subsec:horizon_analysis}

Table~\ref{tab:horizon_results} summarizes accuracy by horizon, pooled across all four regions.

\begin{table}[htbp]
\centering
\caption{Forecast accuracy by horizon (months ahead), pooled across all four regions. Sample sizes: $h=1$, $n=28$; $h=3$, $n=20$; $h=6$, $n=8$.}
\label{tab:horizon_results}
\begin{tabular}{llccc}
\toprule
Horizon & Model & Aitchison Dist & MAE & Coverage \\
\midrule
$h=1$ & Baseline     & 1.209 & 0.0230 & 53.6\% \\
(n=28) & Fixed Effect & 0.932 & 0.0188 & 66.1\% \\
      & Intervention & \textbf{0.921} & \textbf{0.0173} & \textbf{79.6\%} \\
\addlinespace
$h=3$ & Baseline     & 1.196 & 0.0245 & 39.0\% \\
(n=20) & Fixed Effect & \textbf{0.910} & 0.0205 & 57.5\% \\
      & Intervention & 0.947 & \textbf{0.0184} & \textbf{75.0\%} \\
\addlinespace
$h=6$ & Baseline     & 1.067 & 0.0296 & 43.8\% \\
(n=8) & Fixed Effect & \textbf{0.664} & \textbf{0.0148} & 56.3\% \\
      & Intervention & 0.768 & 0.0204 & \textbf{75.0\%} \\
\bottomrule
\end{tabular}
\end{table}

At $h=1$, the intervention model achieves the best point accuracy. At longer horizons ($h=3$, $h=6$), the fixed effect model shows better Aitchison distance as forecasts target dates well after the transition has largely completed. The intervention model maintains its calibration advantage across all horizons: 79.6\% at $h=1$, 75.0\% at $h=3$, and 75.0\% at $h=6$, versus 66.1\%, 57.5\%, and 56.3\% for the fixed effect. The calibration gap widens slightly at longer horizons: 13.5 percentage points at $h=1$, 17.5 at $h=3$, and 18.7 at $h=6$, as the fixed effect model continues to under-cover throughout the transition period.

\begin{figure}[htbp]
\centering
\includegraphics[width=0.85\textwidth]{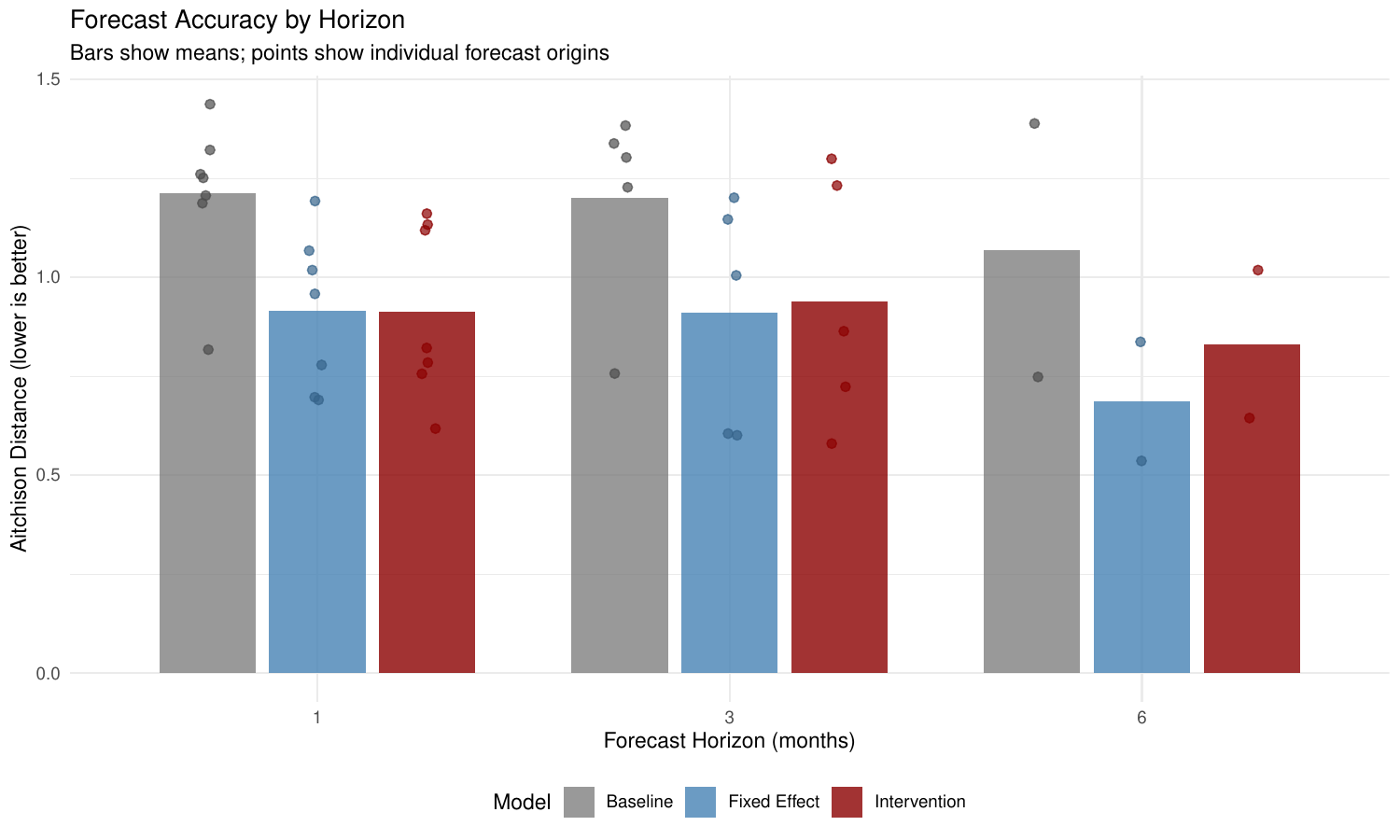}
\caption{Mean Aitchison distance by forecast horizon, pooled across all four regions. At longer horizons, the fixed effect model achieves better point accuracy. Calibration results by horizon are reported in Table~\ref{tab:horizon_results}.}
\label{fig:horizon}
\end{figure}

\section{Second Empirical Application: Stay-Length Composition During COVID-19}
\label{sec:empirical2}

The first empirical application provided a favorable test case: the break date was known, the direction of change was unambiguous ex ante, and the post-break dynamics were largely monotone. The second application examines accommodation stay-length distributions, where both conditions are relaxed. The direction of compositional shift is less obvious ex ante, and the post-break dynamics are demonstrably non-monotone. Our goal is to characterize what the model wins and loses under these harder conditions, not to demonstrate a second unqualified success.

\subsection{Data and Motivation}
\label{subsec:los_data}

The distribution of stay lengths (how long guests book accommodations) is a compositional time series that shifted substantially during COVID-19 \citep{katz2025slomads}. Remote work, travel uncertainty, and changing preferences induced a complex redistribution across stay-length categories that evolved in multiple stages rather than a single monotone transition.

Our data comprise monthly stay-length distributions for three geographic regions (NAMER, EMEA, and LATAM) from January 2018 through December 2022. Each observation is a 7-part composition representing the share of bookings falling into the buckets: 1 night, 2 nights, 3 nights, 4--7 nights, 8--14 nights, 15--27 nights, and 28+ nights. We set $\ell$ to February 2020.

This application presents harder conditions along two dimensions. First, the direction of compositional change is ambiguous ex ante: the 28+ night share was small ($\approx 1.5\%$ pre-COVID) and the redistribution across seven buckets is not obvious without examining the data. Second, the post-break dynamics are non-monotone: the 28+ night category spiked sharply in mid-2020, then partially reverted through 2021--2022 as travel normalized. Figure~\ref{fig:los_recovery} illustrates this arc, with LATAM and NAMER showing the most pronounced spike-and-reversion pattern. This reversion lies outside the scope of the monotone logistic gate. The key question is not whether the model succeeds but precisely what it gains and loses relative to the simpler fixed-effect alternative, a question our reversibility simulation (Section~\ref{subsec:sim_reversibility}) bounds from below.

We exclude APAC from the main analysis. Exploratory analysis (Appendix~\ref{app:los_apac}) shows that APAC's stay-length composition has minimal COVID-19 signal, with the 28+ bucket essentially flat throughout 2020--2022; including it would conflate compositional stability with model performance.

\begin{figure}[htbp]
\centering
\includegraphics[width=0.95\textwidth]{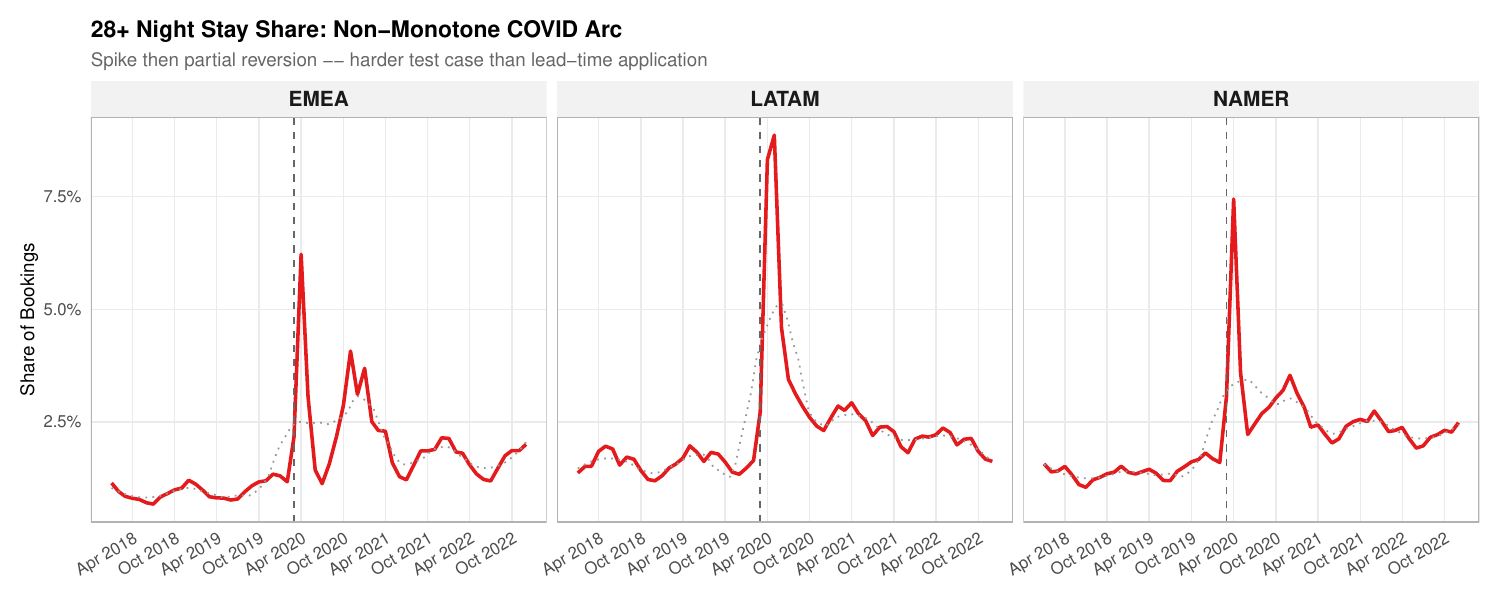}
\caption{Share of bookings with stay lengths of 28+ nights for NAMER, EMEA, and LATAM, January 2018--December 2022. Dashed line = COVID-19 structural break (March 2020). All three regions exhibit a spike followed by partial reversion, a non-monotone arc that lies outside the scope of the monotone logistic gate and motivates a precise comparison of what the intervention model gains and loses relative to the fixed-effect alternative.}
\label{fig:los_recovery}
\end{figure}

Figure~\ref{fig:los_heatmap} displays the full 7-component stay-length heatmaps for all four regions.

\begin{figure}[htbp]
\centering
\includegraphics[width=0.95\textwidth]{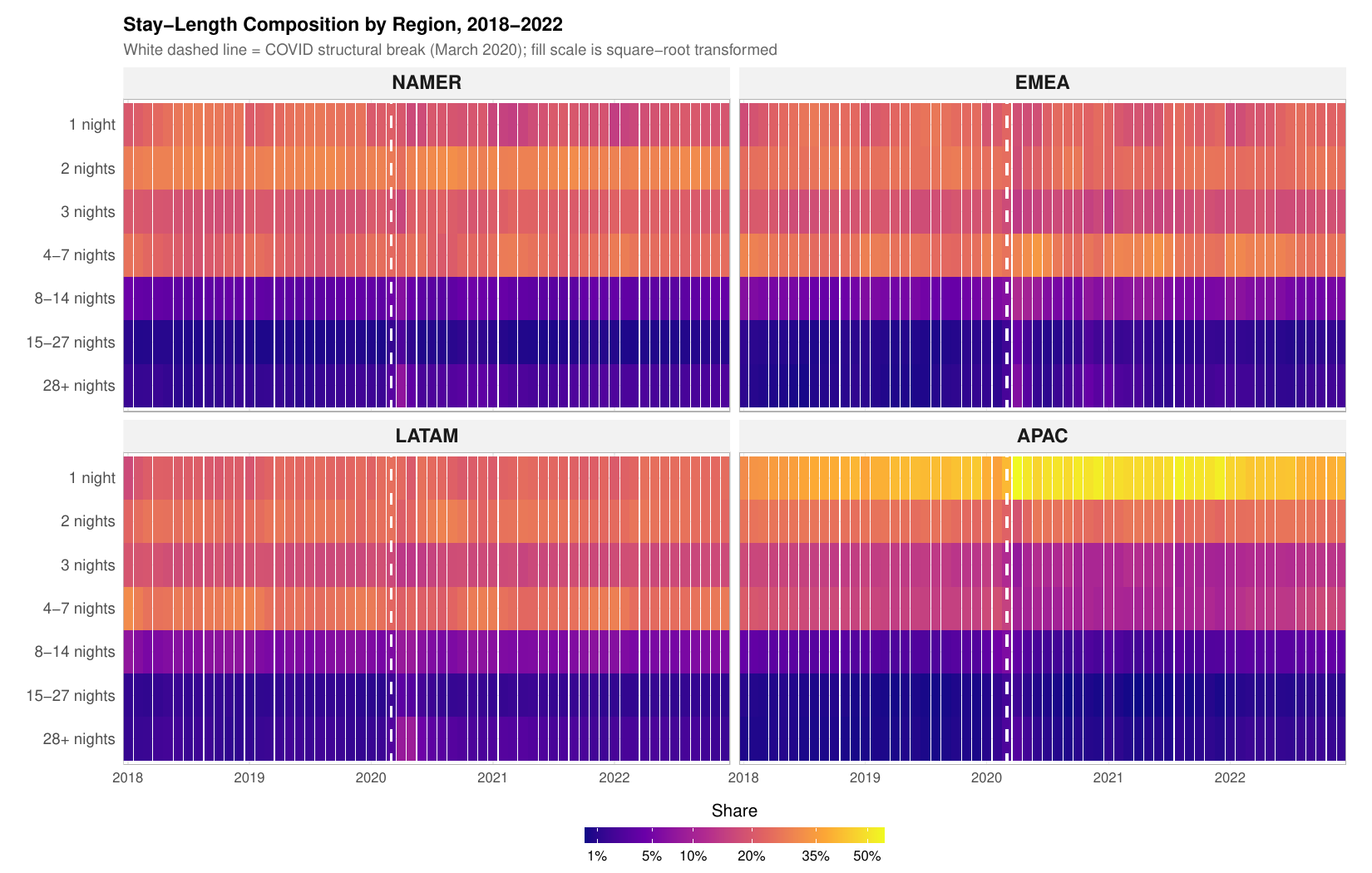}
\caption{Stay-length composition heatmaps for all four regions, January 2018--December 2022. White dashed line = COVID-19 structural break (March 2020). The signal is clear in NAMER, EMEA, and LATAM; APAC shows minimal compositional change and is excluded from the main forecast evaluation.}
\label{fig:los_heatmap}
\end{figure}

\subsection{Model Specifications and Rolling Evaluation}
\label{subsec:los_models}

We compare the same three specifications as in Section~\ref{sec:empirical}: baseline B-DARMA($1,1$) with no break mechanism, a fixed effect model with a post-COVID step dummy appended to the mean covariate vector, and the directional-shift intervention model with logistic gate. All models use Fourier seasonal covariates (three harmonics plus intercept) for the mean and a trend-seasonal model for the precision equation.

We conduct a rolling forecast evaluation with 7 origins per region (July 2020 through January 2021, 21 origins total) at horizons $h \in \{1, 2, 3, 4, 5, 6\}$ months. The evaluation window spans the active transition period, during which the non-monotone dynamics are most pronounced.

\subsection{Residual Cross-Correlation Diagnostic}
\label{subsec:los_crosscorr}

We apply the same fitted-value ILR residual diagnostic as in Section~\ref{subsec:crosscorr} to the stay-length intervention model fits. Table~\ref{tab:crosscorr_los} reports pairwise cross-correlations of $\hat{r}^{\mathrm{fit}}_{t,d} = \ilr(Y_t)_d - \ilr(\hat{\bm{\mu}}_t)_d$ for the full training window at origin 7 (the most data-rich fit, $T \approx 36$ months, $C = 7$).

\begin{table}[htbp]
\centering
\caption{Fitted-value ILR residual cross-correlation diagnostic for the intervention model, stay-length application. Each region has $C = 7$ components and $D = 6$ ILR dimensions, yielding 15 pairs. Training window at origin 7 is approximately $T = 36$ months.}
\label{tab:crosscorr_los}
\begin{tabular}{lcccc}
\toprule
Region & $n$ pairs & Mean $|r|$ & Max $|r|$ & \% Significant \\
\midrule
NAMER & 15 & 0.417 & 0.926 & 20.0\% \\
EMEA  & 15 & 0.459 & 0.871 &  0.0\% \\
LATAM & 15 & 0.460 & 0.889 &  0.0\% \\
\bottomrule
\end{tabular}
\end{table}

The raw cross-correlations are substantially higher than in the lead-time application (mean 0.445 vs.\ 0.178 pooled), but Ljung-Box tests are mostly insignificant (6.7\% of pairs overall vs.\ 56.9\% in the lead-time application). The apparent paradox is explained by two factors. First, with $C = 7$ buckets heavily concentrated in short stays (1--7 nights account for 85--95\% of bookings), the ILR coordinates are highly correlated by the simplex constraint alone, independent of any dynamic misspecification. Second, the training windows are short ($T \approx 36$ months) at the evaluation origins, giving the portmanteau test low power to detect dynamic dependence even if present. Together, these results suggest that the high raw correlations in the stay-length application reflect structural constraint properties of the composition rather than omitted cross-component dynamics, and the low significance rates are consistent with the diagonal assumption being more innocuous here than in the longer lead-time series.

\subsection{Results}
\label{subsec:los_results}

Tables~\ref{tab:los_by_region} and~\ref{tab:los_pooled} present the 1-step-ahead results by region and pooled.

\begin{table}[htbp]
\centering
\caption{Rolling 1-step-ahead forecast evaluation by region, July 2020--January 2021 ($n=7$ origins per region). Lower is better for Aitchison distance, energy score, and MAE; higher is better for plug-in log score. Nominal coverage is 80\%. APAC excluded; see Appendix~\ref{app:los_apac}.}
\label{tab:los_by_region}
\begin{tabular}{llccccc}
\toprule
Region & Model & Aitchison & Energy & Plug-in & MAE & Coverage \\
 & & Distance & Score & Log Score & & (80\%) \\
\midrule
\multirow{3}{*}{EMEA}
 & Baseline     & 0.434 & 0.379 & $-$9.7  & 0.0128 & 53.1\% \\
 & Fixed Effect & 0.327 & 0.292 & 1.4 & \textbf{0.0091} & 81.6\% \\
 & Intervention & \textbf{0.320} & \textbf{0.235} & \textbf{10.3} & 0.0120 & \textbf{79.6\%} \\
\addlinespace
\multirow{3}{*}{LATAM}
 & Baseline     & 0.265 & 0.204 & 2.6  & 0.0098 & 65.3\% \\
 & Fixed Effect & \textbf{0.149} & \textbf{0.132} & \textbf{16.4} & \textbf{0.0061} & \textbf{79.6\%} \\
 & Intervention & 0.268 & 0.188 & 9.7  & 0.0075 & 83.7\% \\
\addlinespace
\multirow{3}{*}{NAMER}
 & Baseline     & 0.324 & 0.264 & $-$8.2 & 0.0108 & \textbf{81.6\%} \\
 & Fixed Effect & 0.159 & \textbf{0.135} & 20.9 & \textbf{0.0053} & 87.8\% \\
 & Intervention & \textbf{0.154} & 0.138 & \textbf{21.5} & 0.0056 & 93.9\% \\
\bottomrule
\end{tabular}
\end{table}

\begin{table}[htbp]
\centering
\caption{Pooled rolling 1-step-ahead forecast evaluation across EMEA, LATAM, and NAMER ($n=21$ origins total).}
\label{tab:los_pooled}
\begin{tabular}{lccccc}
\toprule
Model & Aitchison & Energy & Plug-in & MAE & Coverage \\
 & Distance & Score & Log Score & & (80\%) \\
\midrule
Baseline     & 0.341 & 0.283 & $-$5.1 & 0.0111 & 66.7\% \\
Fixed Effect & \textbf{0.212} & \textbf{0.186} & 12.9 & \textbf{0.0069} & \textbf{83.0\%} \\
Intervention & 0.247 & 0.187 & \textbf{13.8} & 0.0084 & 85.7\% \\
\bottomrule
\end{tabular}
\end{table}

\paragraph{Calibration.} Both break-aware models achieve near-nominal calibration in this application, in contrast to the baseline which substantially under-covers (66.7\%). At the pooled level, the fixed effect at 83.0\% is slightly closer to nominal than the intervention at 85.7\%. At the regional level: in EMEA the intervention is closest to nominal (79.6\% vs.\ 81.6\% for the fixed effect); in LATAM the fixed effect is closest (79.6\% vs.\ 83.7\% for the intervention); in NAMER both break-aware models over-cover substantially and the baseline at 81.6\% is ironically nearest to nominal. The key result is that both break-aware models are acceptably calibrated throughout, the fixed effect is marginally closer to nominal at the pooled level, and the baseline severely under-covers (66.7\% pooled).

\paragraph{Point accuracy.} The fixed effect model outperforms the intervention model on point accuracy in the pooled results (Aitchison distance 0.212 vs.\ 0.247) and in one of three regions (LATAM, where the fixed effect achieves 0.149 vs.\ 0.268 for the intervention). In EMEA the intervention is slightly better (0.320 vs.\ 0.327), and in NAMER the two models are essentially tied (0.154 vs.\ 0.159). The pooled disadvantage is driven primarily by LATAM, where the spike-and-reversion is both the sharpest and the most complete, leaving the fixed effect's flat post-break level as a better approximation of the 2021 equilibrium.

\paragraph{Log score.} The intervention model achieves a higher plug-in log score in two of three regions (EMEA and NAMER) and in the pooled results (13.8 vs.\ 12.9). LATAM is the exception, where the fixed effect's higher point accuracy translates to a higher log score (16.4 vs.\ 9.7). The pooled advantage reflects the intervention model's narrower intervals in EMEA, where it is closer to nominal, together with its log-score lead in NAMER despite over-covering there.

\begin{figure}[htbp]
\centering
\includegraphics[width=0.95\textwidth]{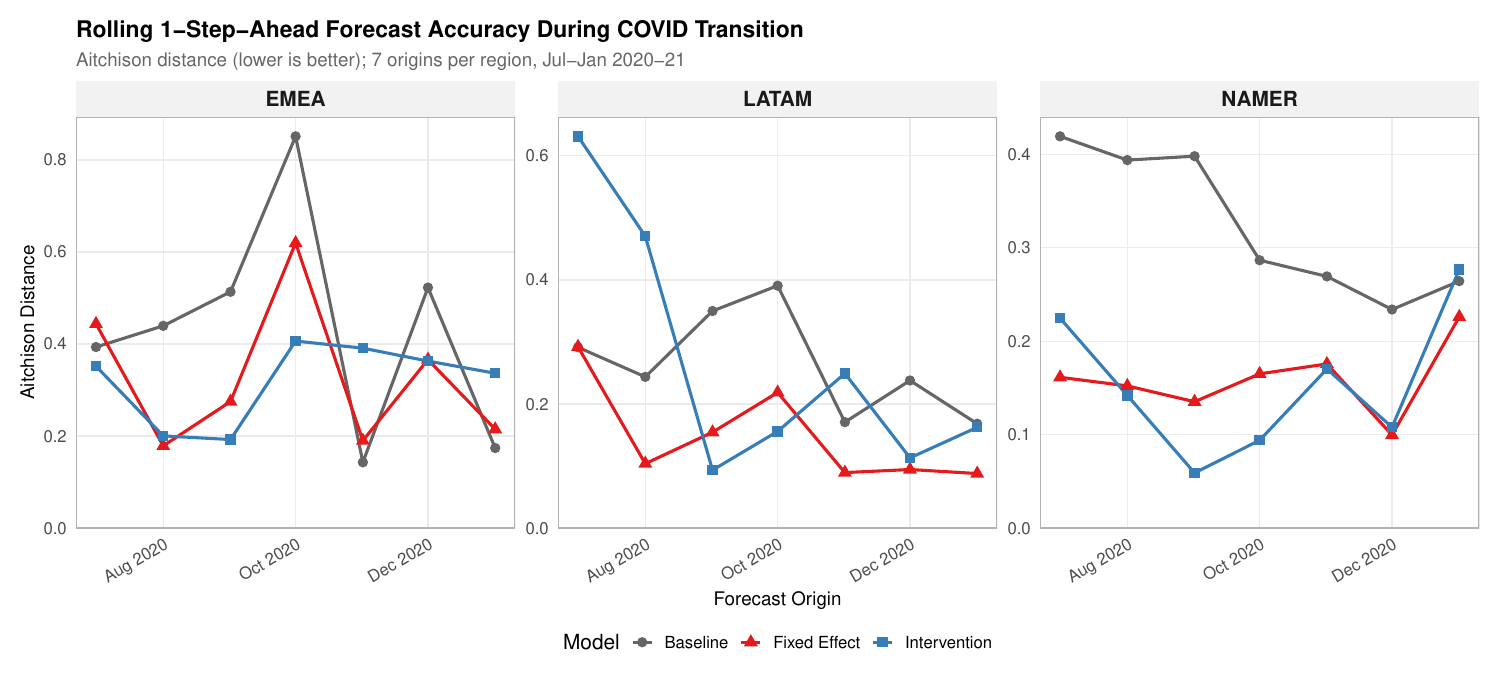}
\caption{Rolling 1-step-ahead Aitchison distance by forecast origin for EMEA, LATAM, and NAMER. The baseline is generally worse on average, especially in EMEA and NAMER. The fixed effect (red) achieves lower point error in LATAM; the intervention (blue) is slightly better in EMEA and essentially tied in NAMER. Calibration results by region are reported in Table~\ref{tab:los_by_region}.}
\label{fig:los_rolling}
\end{figure}

\begin{figure}[htbp]
\centering
\includegraphics[width=0.95\textwidth]{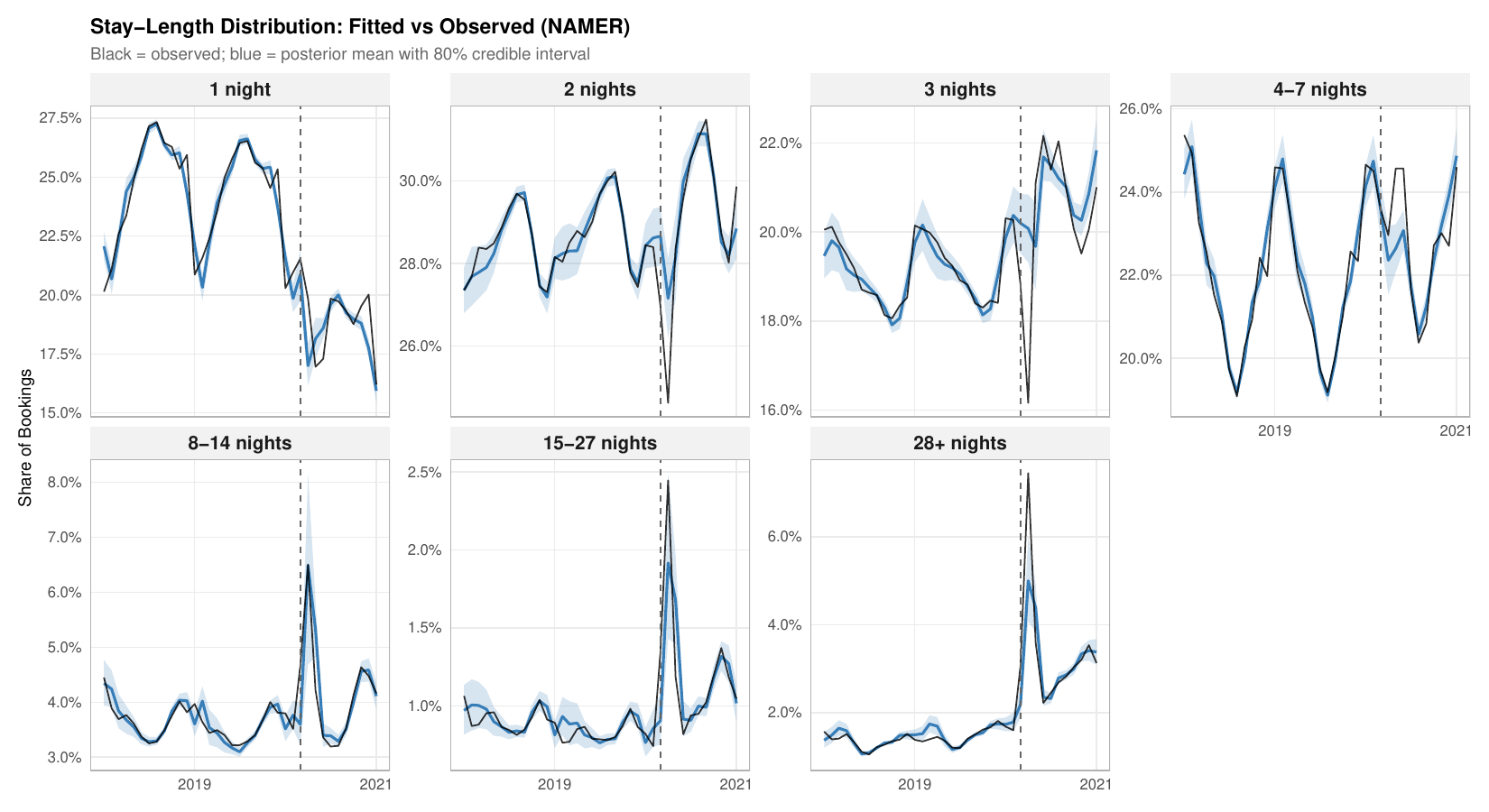}
\caption{In-sample model fit for NAMER: posterior mean (blue) versus observed stay-length shares (black) for all seven categories. Shaded bands show 80\% credible intervals. The intervention model captures both the pre-COVID seasonal patterns and the structural shift beginning March 2020 (dashed line). The partial recovery in the 28+ night category is visible in the lower-right panel; the model approximates the spike but cannot fully represent the reversion given the monotone gate constraint.}
\label{fig:los_fitted}
\end{figure}

\subsection{Horizon Analysis}
\label{subsec:los_horizon}

\begin{table}[htbp]
\centering
\caption{Forecast accuracy by horizon (months ahead), pooled across EMEA, LATAM, and NAMER ($n=21$ origins per horizon).}
\label{tab:los_horizon}
\begin{tabular}{llccc}
\toprule
Horizon & Model & Aitchison Dist & MAE & Coverage \\
\midrule
$h=1$ & Baseline     & 0.341 & 0.0111 & 66.7\% \\
      & Fixed Effect & \textbf{0.212} & \textbf{0.0069} & \textbf{83.0\%} \\
      & Intervention & 0.247 & 0.0084 & 85.7\% \\
\addlinespace
$h=3$ & Baseline     & 0.462 & 0.0132 & 71.4\% \\
      & Fixed Effect & \textbf{0.336} & \textbf{0.0096} & \textbf{81.0\%} \\
      & Intervention & 0.342 & 0.0103 & \textbf{81.0\%} \\
\addlinespace
$h=6$ & Baseline     & 0.437 & 0.0147 & 81.0\% \\
      & Fixed Effect & \textbf{0.385} & 0.0112 & \textbf{80.3\%} \\
      & Intervention & 0.448 & \textbf{0.0111} & 78.2\% \\
\bottomrule
\end{tabular}
\end{table}

At $h=1$ the fixed effect is slightly closer to nominal (83.0\% vs.\ 85.7\%); at $h=3$ both models are tied (81.0\% each); and at $h=6$ the fixed effect remains marginally closer (80.3\% vs.\ 78.2\%). The intervention's acceptable calibration weakens slightly at longer horizons, consistent with the reversibility simulation (Section~\ref{subsec:sim_reversibility}): as forecasts extend into the reversion period, the monotone gate increasingly misspecifies the dynamics.

\begin{figure}[htbp]
\centering
\includegraphics[width=0.95\textwidth]{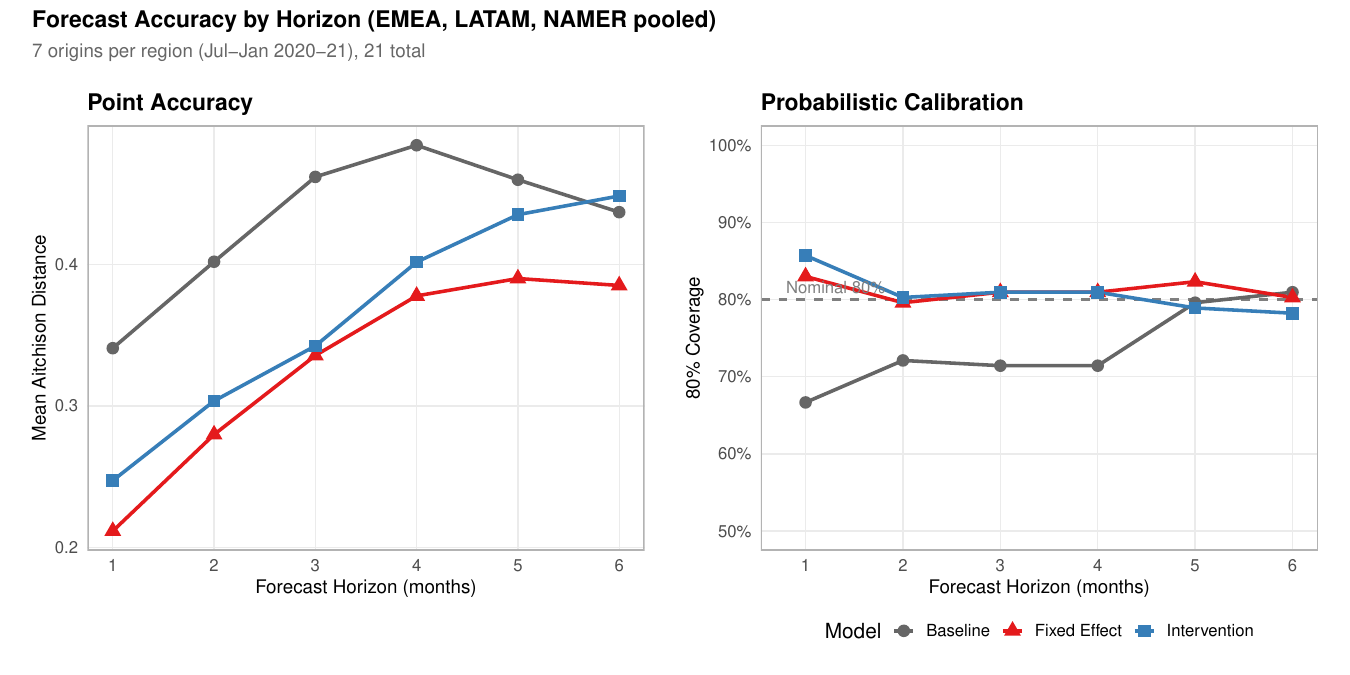}
\caption{Forecast accuracy by horizon for EMEA, LATAM, and NAMER pooled. Left: Aitchison distance (lower is better). Right: 80\% coverage (nominal indicated by dashed line). The fixed effect model achieves consistently lower point error. On coverage, the fixed effect is slightly closer to the 80\% nominal at $h=1$ (83.0\% vs.\ 85.7\%); the two models are tied at $h=3$ (81.0\% each); and the fixed effect remains marginally closer to nominal at $h=6$.}
\label{fig:los_horizon}
\end{figure}

\subsection{Interpretation and Comparison with Lead-Time Application}
\label{subsec:los_comparison}

The two empirical applications together yield a precise characterization of what the directional-shift model gains and loses relative to the fixed-effect alternative. The calibration story differs between the two applications. In the lead-time application, the intervention model is the only well-calibrated specification (79.6\% coverage vs.\ 66.1\% for the fixed effect), so the calibration advantage is clear and large. In the stay-length application, both break-aware models achieve acceptable calibration near nominal (83.0\% and 85.7\%), with the fixed effect actually slightly closer to the 80\% target. The intervention model's tendency to over-cover in the non-monotone setting reflects additional posterior uncertainty about the transition trajectory that is not fully resolved by the data. The point accuracy advantage is conditional: the intervention model matches or beats the fixed effect when the break is monotone and ongoing, but concedes point accuracy when the post-break dynamics include a reversion the monotone gate cannot represent.

This distinction has a clean structural explanation. The fixed effect model's step-function assumption is wrong during an ongoing transition but becomes approximately correct once the composition has settled into a post-break equilibrium. The logistic gate correctly delays full adjustment, which helps calibration throughout but only helps point accuracy while the gate is still rising. When the gate levels off and reality begins reverting, the fixed effect's flat level prediction becomes a better point approximation while the intervention model's wider intervals preserve calibration at some cost to point accuracy.

The practical decision rule follows directly: prefer the directional-shift model when calibrated uncertainty quantification is the primary objective and the break is expected to be roughly monotone. Prefer the fixed effect when the primary objective is point accuracy or the post-break dynamics may be non-monotone. Across the two applications, the main practical choice is usually between the two break-aware specifications rather than between break-aware and na\"{i}ve approaches, because the break-aware models improve materially on pooled point accuracy and on most forecast metrics, though not uniformly on every individual metric.

\section{Discussion}
\label{sec:discussion}

We developed a directional-shift extension to Bayesian Dirichlet ARMA that captures structural breaks through interpretable parameters: direction, amplitude, and timing. Simulations show accurate parameter recovery (conditional on direction identification) with nominal 80\% calibration across a wide range of transition speeds; calibration degrades gracefully under partial reversibility. The two empirical applications establish that the intervention model remains acceptably calibrated in both settings, that its calibration advantage over the fixed effect is specific to the monotone lead-time application, and that the point accuracy advantage is conditional on the transition being roughly monotone.

\paragraph{Point accuracy vs.\ calibration.} Our results highlight an important distinction between point forecast accuracy and probabilistic calibration. In the lead-time application the two break-aware models achieve similar point accuracy at $h=1$, and the fixed effect model achieves better point accuracy at longer horizons. In the stay-length application the fixed effect achieves better pooled point accuracy, though the advantage is concentrated in LATAM; the intervention matches or leads in EMEA and NAMER. In the lead-time application the intervention substantially outperforms the fixed effect on calibration (79.6\% vs.\ 66.1\%); in the stay-length application both models are near-nominal with the fixed effect marginally closer to the 80\% target. For applications where risk quantification matters (such as financial planning, resource allocation, or scenario analysis), the calibration advantage may be more important than modest differences in point accuracy.

\paragraph{Regional consistency.} In the lead-time application, the intervention model's calibration advantage is consistent across all four regions (78.6--81.4\%), suggesting the benefit reflects the smooth transition mechanism rather than any particular market structure.

\paragraph{Practical guidance.} The two empirical applications yield a precise characterization. The intervention model remains acceptably calibrated whether the break is monotone (lead-time) or non-monotone (stay-length), but its relative calibration advantage over the fixed effect appears only in the monotone lead-time application, where the fixed effect severely under-covers. The point accuracy advantage is conditional: it holds when the break is monotone and the transition is ongoing. In the stay-length application the fixed effect leads on pooled point accuracy, driven primarily by LATAM where the reversion is most complete, while the intervention matches or leads in EMEA and NAMER. Prefer the directional-shift model when calibrated uncertainty quantification is the primary objective and the break is expected to be roughly monotone. Prefer the fixed effect when the primary objective is point accuracy or the post-break dynamics may be non-monotone. In both applications, the break-aware models improve materially on pooled point accuracy and on most forecast metrics relative to the baseline, though not uniformly on every individual metric.

\paragraph{Choice of ILR basis.} Because the diagonal dynamics assume that each ILR coordinate evolves independently given the drift, the basis is a modeling choice rather than a neutral re-expression, and a sensible default is to align it with contrasts that are both interpretable and plausibly close to dynamically separate. When the categories carry a natural order, as with the duration-ordered lead-time and stay-length buckets here, the Helmert contrasts we use group adjacent categories and tend to keep cross-coordinate dependence modest. When no ordering is natural, a sequential binary partition built from subject-matter structure, or a data-driven basis such as principal balances aligned with the dominant directions of the residual cross-covariance \citep{egozcue2003, pawlowsky2015}, will reduce the off-diagonal dependence that the diagonal specification ignores. In practice we recommend choosing the basis before estimation from domain knowledge of which share movements are mechanically linked, then checking the fitted-value residual cross-correlations (Section~\ref{subsec:crosscorr}) to confirm the chosen coordinates are approximately uncoupled.

\paragraph{Limitations.} The model assumes a single directional shift with known break date; approximately 22.5\% of simulations fail to recover direction correctly (though calibration remains nominal). Two theoretical concessions deserve explicit acknowledgment.

First, our implementation restricts DARMA dynamics to diagonal AR and MA matrices, so each ILR coordinate evolves independently given the drift term. This is a strong assumption in compositional settings where components must sum to unity. The residual cross-correlation diagnostic confirms that the cost varies by application. In the lead-time application (Sections~\ref{subsec:crosscorr}--\ref{subsec:fullvar_benchmark}), mean absolute pairwise fitted-value ILR residual cross-correlation ranges from 0.165 to 0.186 across regions, with 53--61\% of pairs statistically significant, indicating that non-trivial cross-component dependence remains after fitting the diagonal model. In the stay-length application (Section~\ref{subsec:los_crosscorr}), raw correlations are higher (mean 0.445) but almost entirely insignificant, reflecting the structural simplex constraint of a composition dominated by short stays rather than clear dynamic misspecification. We also fit a dense full-matrix sensitivity benchmark in the lead-time application, holding the Dirichlet likelihood, covariates, and intervention structure fixed while replacing the diagonal AR/MA operators with unrestricted matrices. Despite the residual dependence, the full-matrix benchmark materially worsens pooled point accuracy and calibration at each reported forecast horizon. We therefore interpret the diagonal specification as a deliberate bias-variance tradeoff and a useful regularizing approximation in rolling-window settings with limited post-break samples, not as evidence that off-diagonal dynamics are absent. Between the strict diagonal and the unrestricted full matrix lies a broad middle ground that future work could exploit to retain the dominant cross-dynamics while discarding the rest: hierarchical shrinkage priors that pull the off-diagonal AR and MA elements toward zero unless the data support them \citep{katz2025sensitivity}, low-rank or factor parameterizations of the operators, and sparse estimation that selects only a small number of cross-effects. Such regularized specifications would target a better bias-variance balance than either extreme, while partially restoring the basis invariance that the strict diagonal form sacrifices.

Second, the logistic gate is monotonically increasing by construction, representing only permanent directionally consistent transitions. The reversibility simulation (Section~\ref{subsec:sim_reversibility}) quantifies the cost: coverage degrades from 79\% to 73\% as the recovery fraction increases from 0 to 80\%. The stay-length empirical application corroborates this, with the intervention's acceptable calibration weakening slightly at $h=6$ as forecasts extend into the reversion period. A natural extension, suggested by our reversibility design, is to bring the two-component gate $w_t^{\text{rev}}$ of Equation~\ref{eq:reversible_gate} into the likelihood rather than using it only to generate stress-test data, so the model can represent a shock followed by partial recovery. The obstacles are ones of identification more than implementation: the rise and recovery gates compete to explain the same post-break path, so their locations and speeds separate only weakly unless the reversion is pronounced, and the recovery fraction is poorly identified when the reversion is mild or the post-break sample is short. A workable version would need priors that order the two transitions ($t_{\text{rec}} > \tau$) and bound the recovery fraction, care with the resulting posterior multimodality, and more post-break data than a single rolling window typically provides; we leave this to future work.

Both the diagonal restriction and the monotone gate are fundamental modeling choices rather than incidental implementation details, and practitioners should assess both before applying the model. Finally, all empirical compositions in the present applications are strictly positive at the monthly aggregation used here; applications with zero components would require an explicit replacement step before Dirichlet modeling. Extensions to multiple breaks, unknown break dates, and hierarchical specifications across regions merit future investigation.

\section{Conclusion}
\label{sec:conclusion}

Our directional-shift Dirichlet ARMA model captures structural breaks in compositional time series through interpretable parameters (direction, amplitude, and timing) while preserving the Dirichlet likelihood and ARMA dynamics of the B-DARMA framework. The logistic gate provides smooth transitions that match empirical patterns of gradual adjustment. Simulations demonstrate accurate parameter recovery with nominal calibration across a wide range of transition speeds, and calibration degrades gracefully under partial reversibility. Two COVID-era empirical applications provide a precise characterization of when the model provides meaningful gains: the intervention model remains acceptably calibrated in both settings (79.6\% and 85.7\%), but its relative calibration advantage over the fixed effect is specific to the monotone lead-time application. A direct full-matrix sensitivity benchmark in the lead-time application further shows that naively relaxing the diagonal AR/MA restriction can materially worsen pooled out-of-sample point accuracy and calibration at each reported forecast horizon, supporting the diagonal specification as a practical regularizing approximation in rolling-window settings. The point accuracy advantage is conditional on the transition being roughly monotone. This characterization provides actionable guidance for practitioners choosing between break-aware specifications for compositional forecasting.

\bigskip

\section*{Code and Data Availability}
The primary datasets used in our study, the fee recognition lead-time and accommodation stay-length booking compositions, are not publicly available due to confidentiality constraints. However, the Stan code for the directional-shift Dirichlet Auto-Regressive Moving Average (DARMA) model is available for public access, as well as the R scripts for the simulation studies. It can be found at our GitHub repository: \href{https://github.com/harrisonekatz/gated_darma_project}{GitHub repository}.

\section*{Acknowledgement}
The author thanks Jess Needleman, Liz Medina, and Amy Ding for helpful discussions, Yuanyuan Cui, Peter Coles, and Adam Liss for championing the research, and the Editor-in-Chief, Associate Editors, and two anonymous reviewers for constructive feedback that improved the manuscript.

\bibliographystyle{chicago}
\bibliography{references}

\appendix

\section{Helmert Contrast Matrix}
\label{app:helmert}

We use a Helmert-style orthonormal contrast matrix for ILR transformation. For $C$ categories, $\mathbf{V} \in \R^{C \times (C-1)}$ with entries $V_{j,i} = 1/\sqrt{i(i+1)}$ for $j \leq i$, $V_{i+1,i} = -i/\sqrt{i(i+1)}$, and $V_{j,i} = 0$ for $j > i+1$. This satisfies $\mathbf{V}^\top \mathbf{V} = \mathbf{I}_{C-1}$.

\section{Geodesic Property and Basis Invariance}
\label{app:geodesic}

We establish that directional shifts in ILR space correspond to geodesics on the simplex under Aitchison geometry.

\begin{proposition}[Geodesic motion]
Let $\bm{\eta}_0 \in \R^{C-1}$ be an ILR coordinate and $\bm{v} \in \R^{C-1}$ a unit direction. The curve $\bm{\mu}(w) = \ilrinv(\bm{\eta}_0 + w\bm{v})$ for $w \in \R$ is a geodesic on $(\simplexC, d_A)$.
\end{proposition}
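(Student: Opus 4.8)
The plan is to deduce the claim directly from the isometry property recorded in Section~\ref{sec:background}, rather than through any Riemannian computation. Writing $\ilr = \mathbf{V}^\top \clr$ with $\mathbf{V}^\top\mathbf{V} = \mathbf{I}_{C-1}$ and columns of $\mathbf{V}$ orthogonal to $\mathbf{1}_C$, one has $d_A(x,y) = \|\clr(x) - \clr(y)\|_2 = \|\ilr(x) - \ilr(y)\|_2$ for all $x,y \in \simplexC$, so $\ilr$ is a bijective isometry of $(\simplexC, d_A)$ onto $(\R^{C-1}, \|\cdot\|_2)$ with inverse $\ilrinv$. In a finite-dimensional Euclidean space the metric geodesics are exactly the constant-speed (affinely parameterized) straight lines, so $w \mapsto \bm{\eta}_0 + w\bm{v}$ is a geodesic, and the hypothesis $\|\bm{v}\| = 1$ makes it unit-speed. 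Since isometries of metric spaces carry minimizing curves to minimizing curves, the image $\bm{\mu}(w) = \ilrinv(\bm{\eta}_0 + w\bm{v})$ is a unit-speed, globally minimizing geodesic on $(\simplexC, d_A)$.

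To make this self-contained at the metric level, I would verify the minimizing and constant-speed characterization explicitly. For any $w_1 \le w_2 \le w_3$, the isometry gives
\[
d_A\bigl(\bm{\mu}(w_1), \bm{\mu}(w_3)\bigr) = \bigl\|(w_3 - w_1)\bm{v}\bigr\| = (w_3 - w_1) = (w_2 - w_1) + (w_3 - w_2) = d_A\bigl(\bm{\mu}(w_1), \bm{\mu}(w_2)\bigr) + d_A\bigl(\bm{\mu}(w_2), \bm{\mu}(w_3)\bigr),
\]
so $\bm{\mu}$ saturates the triangle inequality, realizes the Aitchison distance between every pair of its points, and is parameterized proportionally to arc length. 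Existence and uniqueness of the geodesic through $\bm{\mu}(0)$ with velocity $\bm{v}$ are likewise inherited from the Euclidean side through $\ilr$.

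The one point requiring care is the meaning of ``geodesic'' on $(\simplexC, d_A)$, since the paper has introduced the Aitchison \emph{distance} but not an explicit differentiable structure. I would note that $d_A$ is precisely the distance of the $(C-1)$-dimensional inner-product (hence flat and complete) Aitchison space, so the metric-space notion used above is unambiguous and, on such a space, these curves are exactly the images under $\ilrinv$ of straight lines. An alternative but more laborious route would equip $\simplexC$ with the pull-back Riemannian metric $\langle\cdot,\cdot\rangle_A$, observe that in ILR coordinates it is the constant Euclidean metric so all Christoffel symbols vanish, and conclude that the geodesic equation reduces to $\ddot{\bm{\eta}}(w) = 0$, solved by $\bm{\eta}_0 + w\bm{v}$; the isometry argument sidesteps this. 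I expect the only genuine subtlety is bookkeeping about parameterization: the unit-norm constraint on $\bm{v}$ is needed only for the natural constant-speed parameterization, whereas any nonzero $\bm{v}$ traces the same geodesic as a point set, which is the fact relevant to the subsequent basis-invariance discussion.
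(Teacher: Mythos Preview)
Your argument is correct and follows essentially the same route as the paper: both deduce the result from the fact that $\ilr$ is an isometry between $(\simplexC,d_A)$ and Euclidean $\R^{C-1}$, observe that straight lines are the Euclidean geodesics, and transport this back via $\ilrinv$. Your additional explicit check via saturation of the triangle inequality and the remarks on parameterization and the alternative Riemannian computation are supplementary elaboration rather than a different method.
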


\begin{proof}
The ILR transformation is an isometry between $(\simplexC, d_A)$ and $(\R^{C-1}, \|\cdot\|_2)$ \citep{egozcue2003}. Geodesics in Euclidean space are straight lines. The curve $\bm{\eta}(w) = \bm{\eta}_0 + w\bm{v}$ is a straight line in $\R^{C-1}$. Since isometries preserve geodesics, $\bm{\mu}(w) = \ilrinv(\bm{\eta}(w))$ is a geodesic on the simplex.
\end{proof}

\begin{proposition}[Basis invariance]
Let $\mathbf{V}$ and $\tilde{\mathbf{V}}$ be two orthonormal ILR contrast matrices related by rotation $\mathbf{R} \in O(C-1)$, so $\tilde{\mathbf{V}} = \mathbf{V}\mathbf{R}$. If $(\bm{v}, \Delta)$ parameterizes a shift in $\mathbf{V}$-coordinates, then $(\tilde{\bm{v}}, \Delta)$ with $\tilde{\bm{v}} = \mathbf{R}^\top \bm{v}$ parameterizes the same simplex trajectory in $\tilde{\mathbf{V}}$-coordinates.
\end{proposition}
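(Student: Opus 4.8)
The plan is to show that the two parameterizations produce the same ILR-displacement-vector up to the change-of-basis rotation, and then invoke the fact that the simplex trajectory depends only on the CLR representation, which is basis-free. Concretely, the trajectory in $\mathbf{V}$-coordinates is $\bm{\mu}(w) = \ilrinv(\bm{\eta}_0 + w\Delta\bm{v})$ for $w \in [0,1]$ (or more precisely the curve swept by the gate $w_t$), where $\ilrinv$ uses $\mathbf{V}$: $\bm{\mu}(w) = \clr^{-1}(\mathbf{V}(\bm{\eta}_0 + w\Delta\bm{v}))$. In $\tilde{\mathbf{V}}$-coordinates the same starting composition has ILR coordinate $\tilde{\bm{\eta}}_0 = \tilde{\mathbf{V}}^\top \clr(\bm{\mu}(0)) = \mathbf{R}^\top\mathbf{V}^\top\clr(\bm{\mu}(0)) = \mathbf{R}^\top\bm{\eta}_0$, and the proposed direction is $\tilde{\bm{v}} = \mathbf{R}^\top\bm{v}$, which indeed satisfies $\|\tilde{\bm{v}}\| = \|\mathbf{R}^\top\bm{v}\| = \|\bm{v}\| = 1$ since $\mathbf{R}$ is orthogonal.

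The key computation is then to expand the trajectory generated by $(\tilde{\bm{v}}, \Delta)$ in $\tilde{\mathbf{V}}$-coordinates and show it equals $\bm{\mu}(w)$. Using $\ilrinv$ with respect to $\tilde{\mathbf{V}}$, the trajectory is $\tilde{\bm{\mu}}(w) = \clr^{-1}(\tilde{\mathbf{V}}(\tilde{\bm{\eta}}_0 + w\Delta\tilde{\bm{v}})) = \clr^{-1}(\mathbf{V}\mathbf{R}(\mathbf{R}^\top\bm{\eta}_0 + w\Delta\mathbf{R}^\top\bm{v}))$. Since $\mathbf{R}\mathbf{R}^\top = \mathbf{I}_{C-1}$, the inner expression collapses to $\mathbf{V}(\bm{\eta}_0 + w\Delta\bm{v})$, hence $\tilde{\bm{\mu}}(w) = \clr^{-1}(\mathbf{V}(\bm{\eta}_0 + w\Delta\bm{v})) = \bm{\mu}(w)$ for every $w$. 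I would carry this out in the order: (i) note $\|\tilde{\bm{v}}\|=1$; (ii) relate $\tilde{\bm{\eta}}_0$ to $\bm{\eta}_0$ via $\mathbf{R}$; (iii) substitute into the $\tilde{\mathbf{V}}$-trajectory; (iv) cancel $\mathbf{R}\mathbf{R}^\top$; (v) conclude equality of curves, which in particular gives equality of endpoints $\bm{\mu}(0)$ and $\bm{\mu}(1)$.

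One subtlety worth flagging is that $\clr^{-1}$ (equivalently, the closure operation recovering a composition from a CLR vector) is itself basis-independent and well-defined on the image $\{u \in \R^C : \mathbf{1}^\top u = 0\}$; since $\mathbf{V}$ and $\tilde{\mathbf{V}}$ are both ILR contrast matrices their columns span exactly this subspace, so $\mathbf{V}(\cdot)$ and $\tilde{\mathbf{V}}(\cdot)$ land in the same domain and $\clr^{-1}$ applies identically. A second point: the full model's gate $w_t(\tau,\kappa)$ is a fixed scalar schedule not depending on the ILR basis, and the DARMA recursion \eqref{eq:full_dynamics} acts on $Z_{t-p} - d_{t-p}$, whose basis transformation is again by $\mathbf{R}^\top$ throughout; so the same argument extends verbatim to show the entire fitted mean path $\{\bm{\mu}_t\}$ on the simplex is invariant, not merely the isolated geodesic. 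I do not expect a genuine obstacle here — the result is essentially the observation that $O(C-1)$ acts by isometries fixing the Euclidean geometry, so only bookkeeping of the $\mathbf{R}$/$\mathbf{R}^\top$ placement requires care; the one place to be deliberate is confirming that the assumed relation $\tilde{\mathbf{V}} = \mathbf{V}\mathbf{R}$ is the correct (rather than transposed) convention so that coordinates transform contravariantly as $\tilde{\bm{\eta}} = \mathbf{R}^\top\bm{\eta}$.
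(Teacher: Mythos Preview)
Your core argument is correct and follows the same route as the paper: relate $\tilde{\bm{\eta}}$ to $\bm{\eta}$ via $\mathbf{R}^\top$, push the shift through, cancel $\mathbf{R}\mathbf{R}^\top$, and conclude that both ILR coordinates map to the same simplex point; your version is simply more explicit about writing $\ilrinv$ as $\clr^{-1}\circ\mathbf{V}$ and checking $\|\tilde{\bm{v}}\|=1$. One caution on your final aside: extending invariance ``verbatim'' to the full DARMA recursion requires the AR/MA operators to transform as $\mathbf{A}_p \mapsto \mathbf{R}^\top \mathbf{A}_p \mathbf{R}$ and $\mathbf{\Theta}_q \mapsto \mathbf{R}^\top \mathbf{\Theta}_q \mathbf{R}$, which takes you out of the diagonal class used in the implementation; the paper itself flags (in the scope remark immediately following this proposition) that the fitted model with diagonal dynamics is \emph{not} basis-invariant, only the directional-shift mechanism is.
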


\begin{proof}
Let $\bm{\eta} = \mathbf{V}^\top \clr(Y)$ and $\tilde{\bm{\eta}} = \tilde{\mathbf{V}}^\top \clr(Y) = \mathbf{R}^\top \bm{\eta}$. The shifted coordinate in $\mathbf{V}$-space is $\bm{\eta} + \Delta \bm{v}$. In $\tilde{\mathbf{V}}$-space, this becomes $\mathbf{R}^\top(\bm{\eta} + \Delta \bm{v}) = \tilde{\bm{\eta}} + \Delta \mathbf{R}^\top \bm{v} = \tilde{\bm{\eta}} + \Delta \tilde{\bm{v}}$. Both map to the same simplex point under $\ilrinv$.
\end{proof}

These results justify reporting direction effects in CLR space (via $\bm{u} = \mathbf{V}\bm{v}$), which is basis-invariant since $\tilde{\mathbf{V}}\tilde{\bm{v}} = \mathbf{V}\mathbf{R}\mathbf{R}^\top\bm{v} = \mathbf{V}\bm{v}$.

\paragraph{Scope of invariance.} Note that this basis invariance applies to the intervention trajectory itself. When diagonal AR/MA dynamics are used (as in our implementation), the assumption of independent evolution across ILR coordinates is basis-dependent; rotating the ILR basis would change which directions evolve independently. The full fitted model is thus not basis-invariant, though the directional shift mechanism is.

\section{Prior Specification}
\label{app:priors}

Table~\ref{tab:priors} summarizes the prior distributions used in the empirical applications. The simulation study (Section~\ref{sec:simulation}) uses tighter priors for the intervention speed and location: $\tau \sim \N(\ell + 2, 3^2)$ and $\kappa \sim \text{LogNormal}(0, 0.5^2)$; all other priors are as listed below.

\begin{table}[htbp]
\centering
\caption{Prior distributions for model parameters.}
\label{tab:priors}
\begin{tabular}{lll}
\toprule
Parameter & Prior & Description \\
\midrule
\multicolumn{3}{l}{\emph{Baseline parameters}} \\
$b_d$ & $\N(0, 2.5^2)$ & ILR intercepts \\
$B_{d,k}$ & $\N(0, 1^2)$ & Covariate coefficients \\
$A_{d,d}$ & $\text{Uniform}(-0.99, 0.99)$ & AR diagonal \\
$\Theta_{d,d}$ & $\text{Uniform}(-0.99, 0.99)$ & MA diagonal \\
$\gamma_{\phi,k}$ & $\N(0, 1^2)$ & Concentration coefficients \\
\addlinespace
\multicolumn{3}{l}{\emph{Intervention parameters}} \\
$\Delta$ & $\N(0, 1.5^2)$ & Shift amplitude \\
$\tau$ & $\N(\ell + 2, 4^2)$ & Transition location parameter \\
$\kappa$ & $\text{LogNormal}(-0.5, 1^2)$ & Transition speed \\
$\bm{v}$ & Uniform on hemisphere & Direction (unit vector) \\
$\delta_\phi$ & $\N(0, 0.5^2)$ & Precision shift \\
\bottomrule
\end{tabular}
\end{table}

The hemisphere constraint on $\bm{v}$ resolves sign ambiguity by requiring $v_1 \geq 0$, implemented in Stan by constraining $v_1^{\text{raw}} > 0$ before normalization; see Section~\ref{sec:priors} for discussion.

\section{Evaluation Metrics}
\label{app:metrics}

\paragraph{Aitchison distance.} Point forecast accuracy is measured by $d_A(\hat{\bm{\mu}}_t, Y_t) = \|\clr(\hat{\bm{\mu}}_t) - \clr(Y_t)\|_2$, where $\hat{\bm{\mu}}_t$ is the posterior mean of the Dirichlet mean parameter.

\paragraph{Energy score.} The energy score generalizes CRPS to multivariate settings. We compute it in Aitchison geometry:
\[
\text{ES}(F, Y) = \E_F[d_A(\tilde{Y}, Y)] - \tfrac{1}{2}\E_F[d_A(\tilde{Y}, \tilde{Y}')]
\]
where $\tilde{Y}, \tilde{Y}'$ are independent draws from the posterior predictive $F$ and $Y$ is the observation. Lower is better.

\paragraph{Plug-in log score.} We report the log predictive density evaluated at posterior mean parameters: $\log p(Y_t \mid \hat{\bm{\mu}}_t, \hat{\lambda}_t) = \log \Dir(Y_t; \hat{\lambda}_t \hat{\bm{\mu}}_t)$. This plug-in approximation is computationally convenient and provides a useful comparison metric, though it does not account for parameter uncertainty. Higher is better.

\paragraph{Coverage.} We report marginal (componentwise) coverage: the proportion of observations falling within 80\% posterior predictive intervals, averaged across components. Nominal coverage is 80\%.

\paragraph{Mean absolute error.} Componentwise absolute error averaged across components and forecast cases: $\text{MAE} = \frac{1}{nC}\sum_{t,c}|\hat{\mu}_{t,c} - Y_{t,c}|$.

\section{Zero Handling}
\label{app:zeros}

The Dirichlet distribution requires strictly positive components. In our applications, all observed compositions have positive entries; no zero proportions occur in the aggregated monthly data. For applications where zeros may arise, standard approaches include additive replacement \citep{martinfernandez2003} or multiplicative replacement prior to analysis.

\section{Additional Regional Results: Lead-Time Application}
\label{app:regional}

Figures~\ref{fig:emea_heatmap}--\ref{fig:apac_fitted} present the exploratory data analysis heatmaps and fitted model plots for the EMEA, LATAM, and APAC regions, complementing the North American results shown in the main text.

\begin{figure}[htbp]
\centering
\begin{subfigure}[b]{0.48\textwidth}
\includegraphics[width=\textwidth]{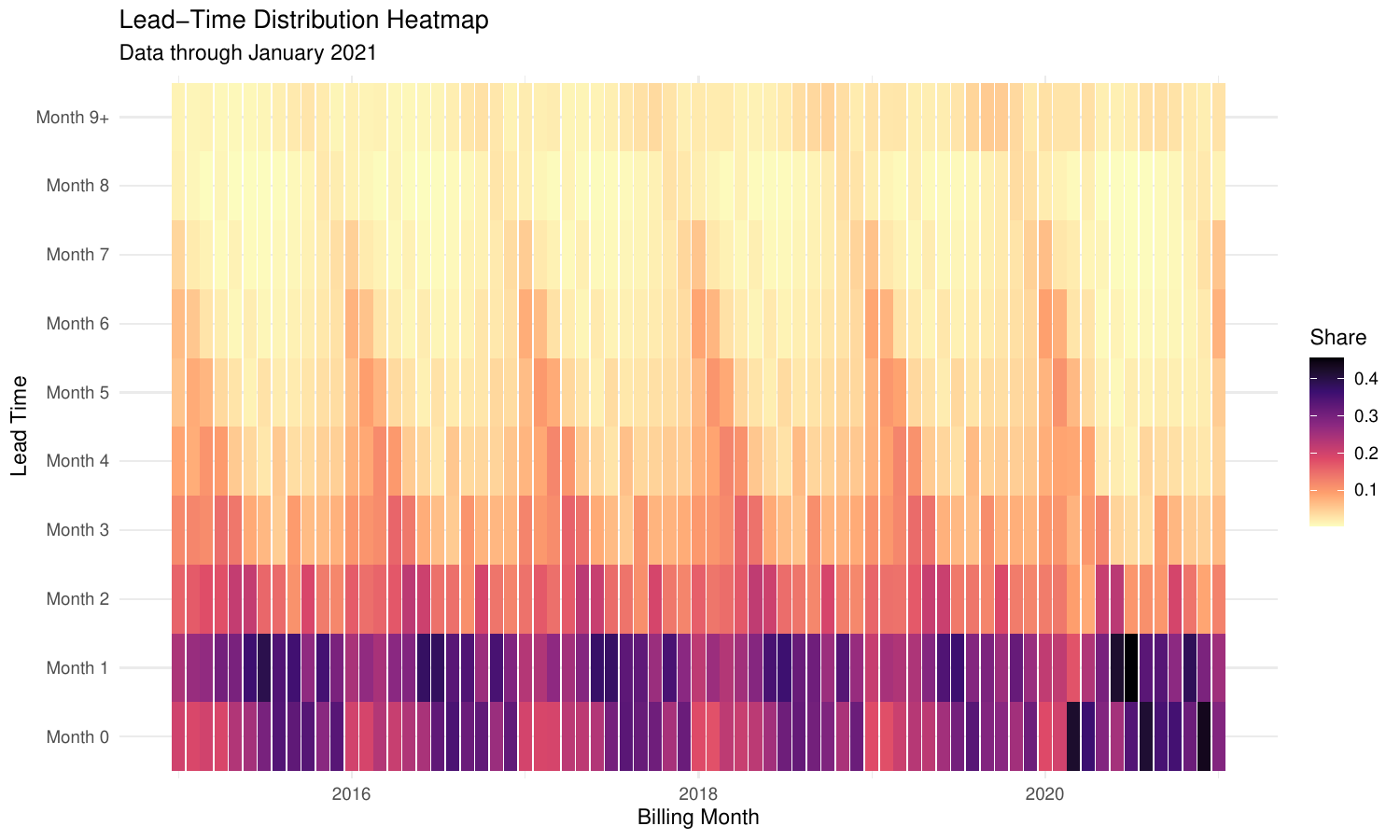}
\caption{EMEA}
\label{fig:emea_heatmap}
\end{subfigure}
\hfill
\begin{subfigure}[b]{0.48\textwidth}
\includegraphics[width=\textwidth]{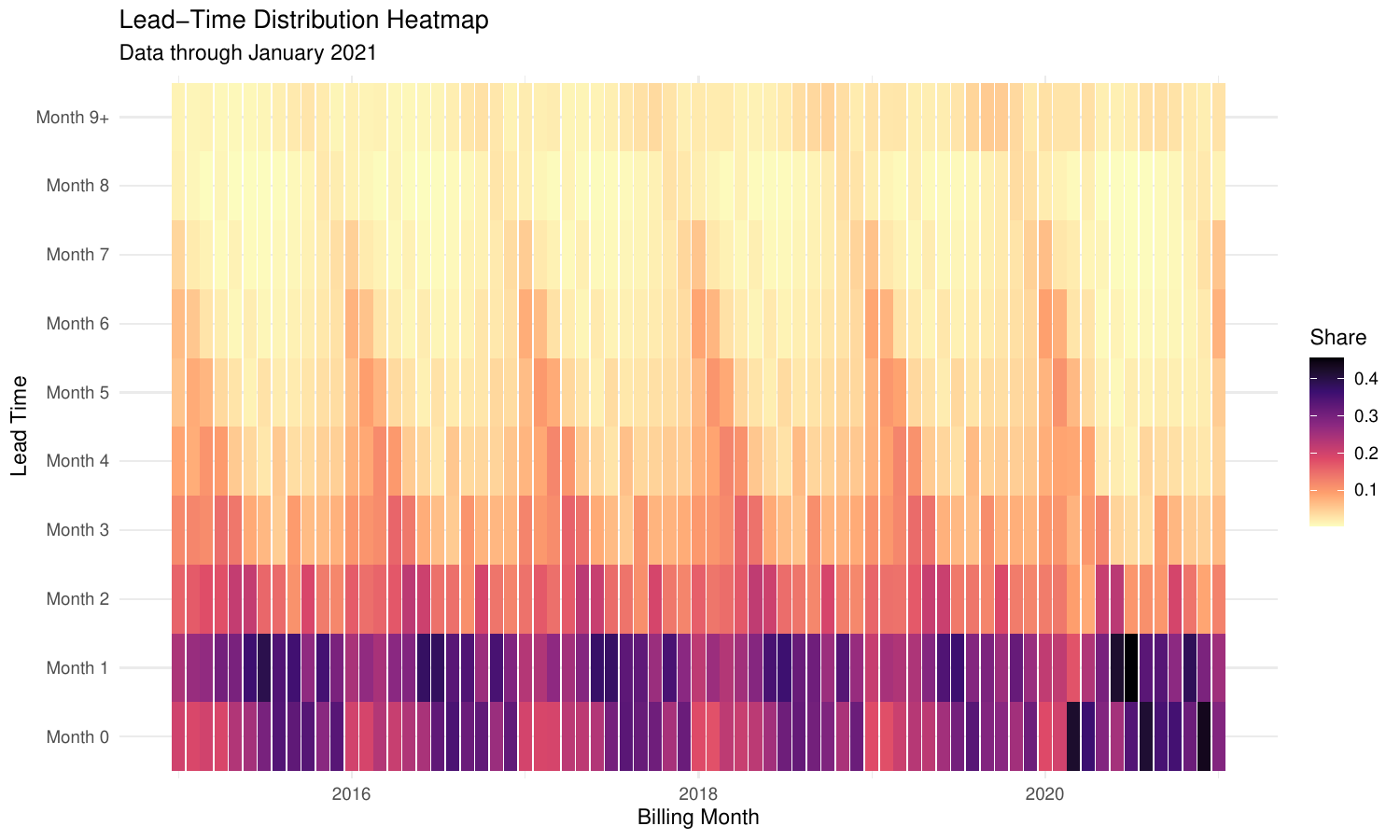}
\caption{LATAM}
\label{fig:latam_heatmap}
\end{subfigure}

\vspace{0.5cm}

\begin{subfigure}[b]{0.48\textwidth}
\includegraphics[width=\textwidth]{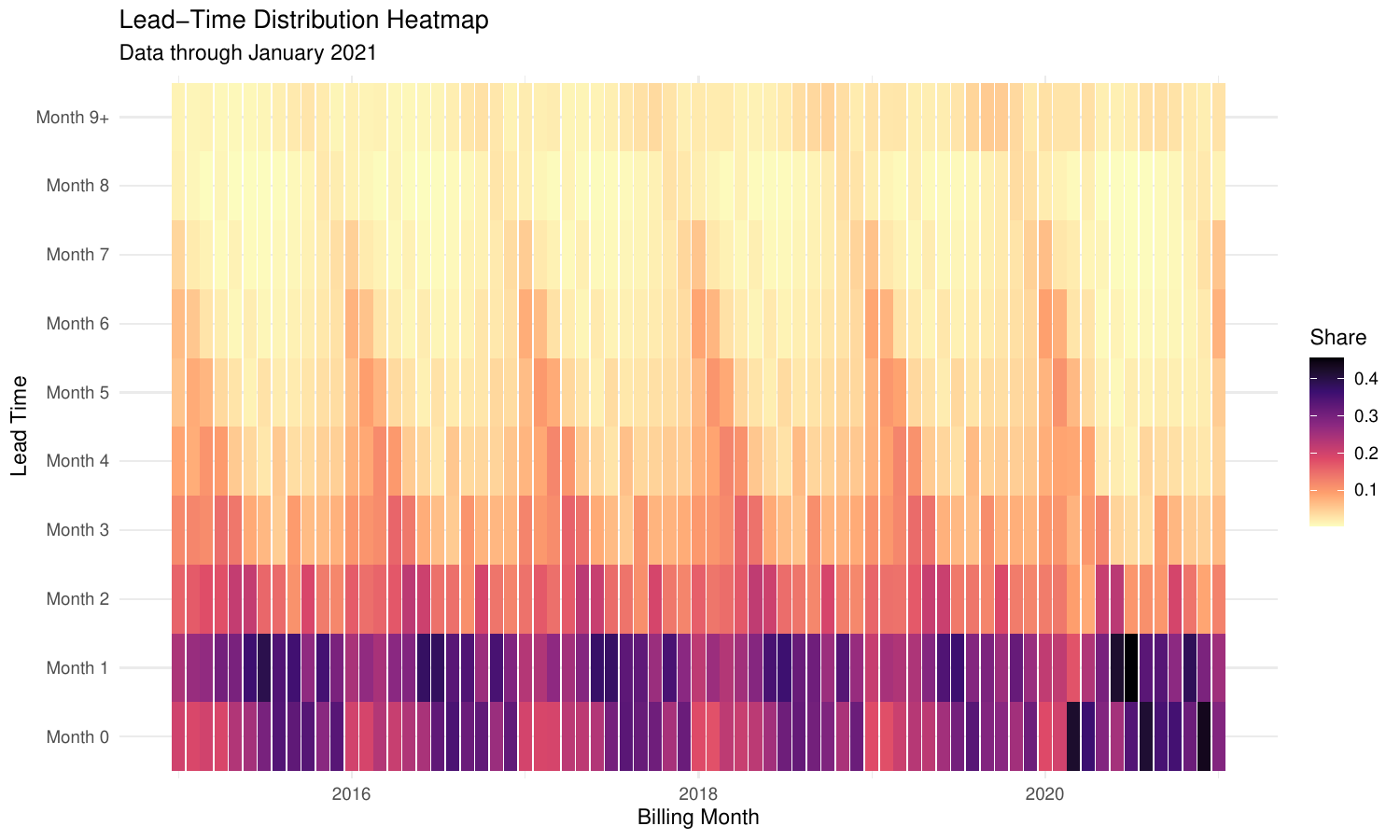}
\caption{APAC}
\label{fig:apac_heatmap}
\end{subfigure}
\caption{Lead-time distribution heatmaps for EMEA, LATAM, and APAC regions. All regions show the characteristic COVID-19 structural break in March 2020 with a shift toward shorter lead times, though the magnitude and timing of recovery varies across regions.}
\label{fig:regional_heatmaps}
\end{figure}

\begin{figure}[htbp]
\centering
\begin{subfigure}[b]{0.48\textwidth}
\includegraphics[width=\textwidth]{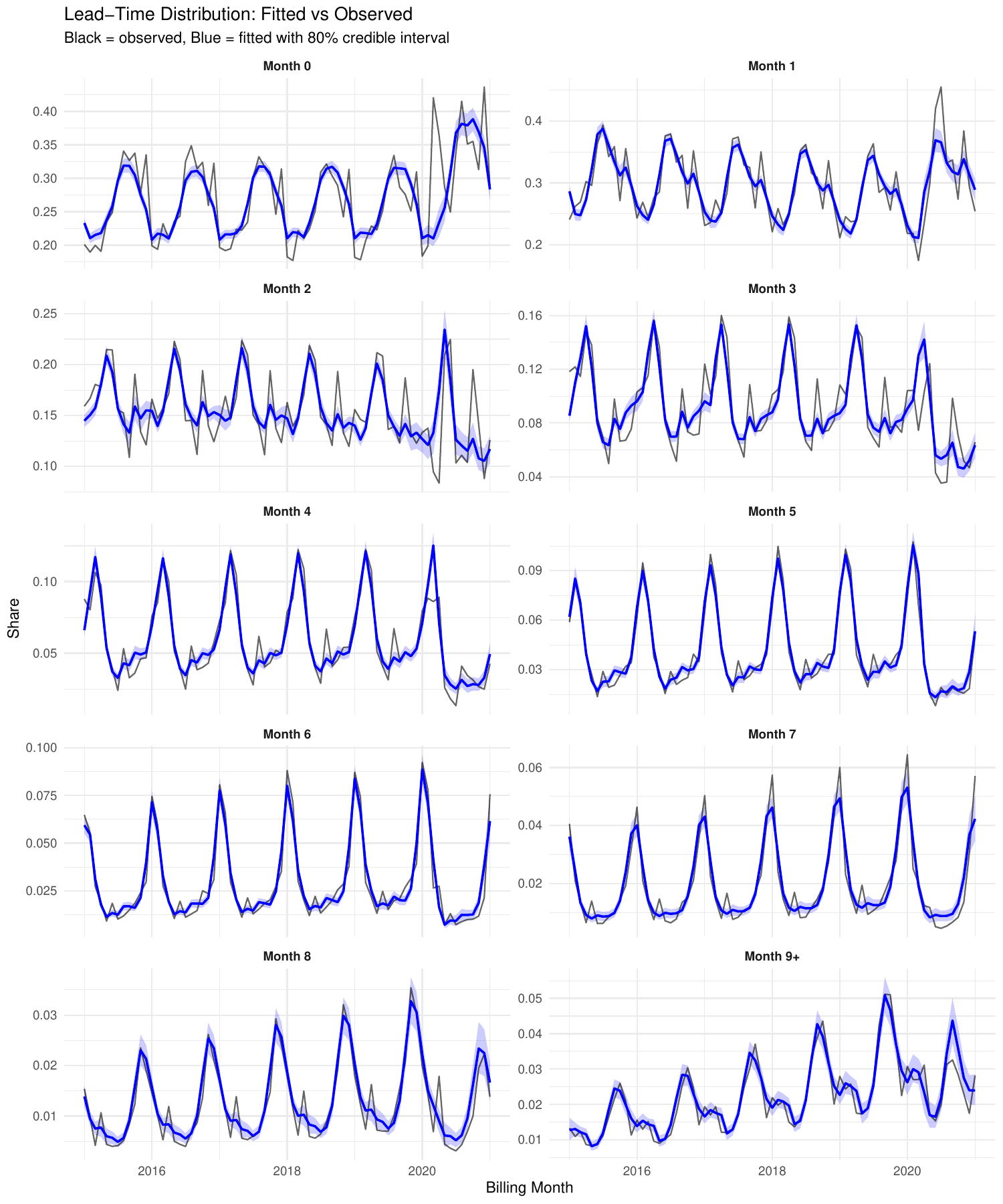}
\caption{EMEA}
\label{fig:emea_fitted}
\end{subfigure}
\hfill
\begin{subfigure}[b]{0.48\textwidth}
\includegraphics[width=\textwidth]{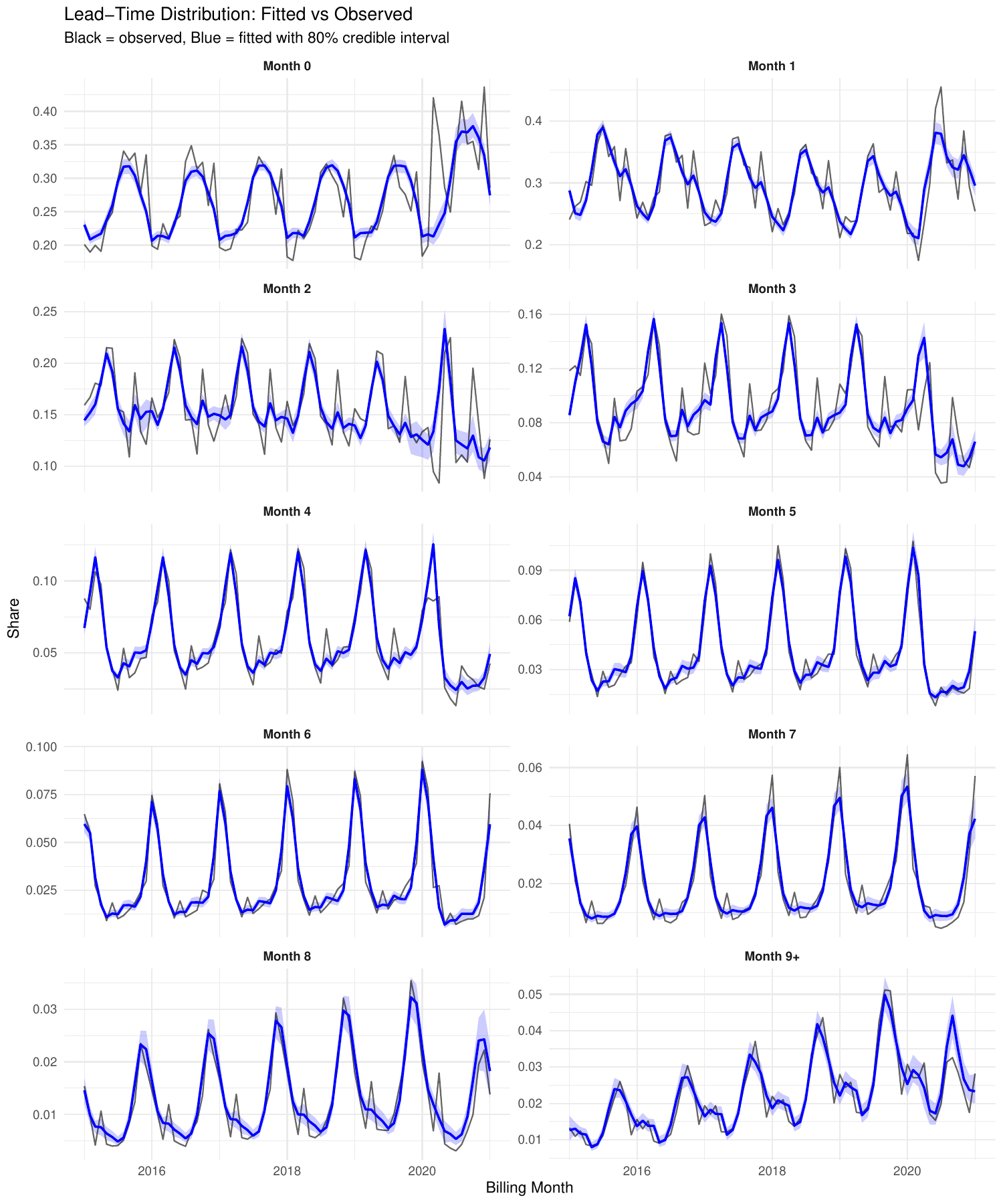}
\caption{LATAM}
\label{fig:latam_fitted}
\end{subfigure}

\vspace{0.5cm}

\begin{subfigure}[b]{0.48\textwidth}
\includegraphics[width=\textwidth]{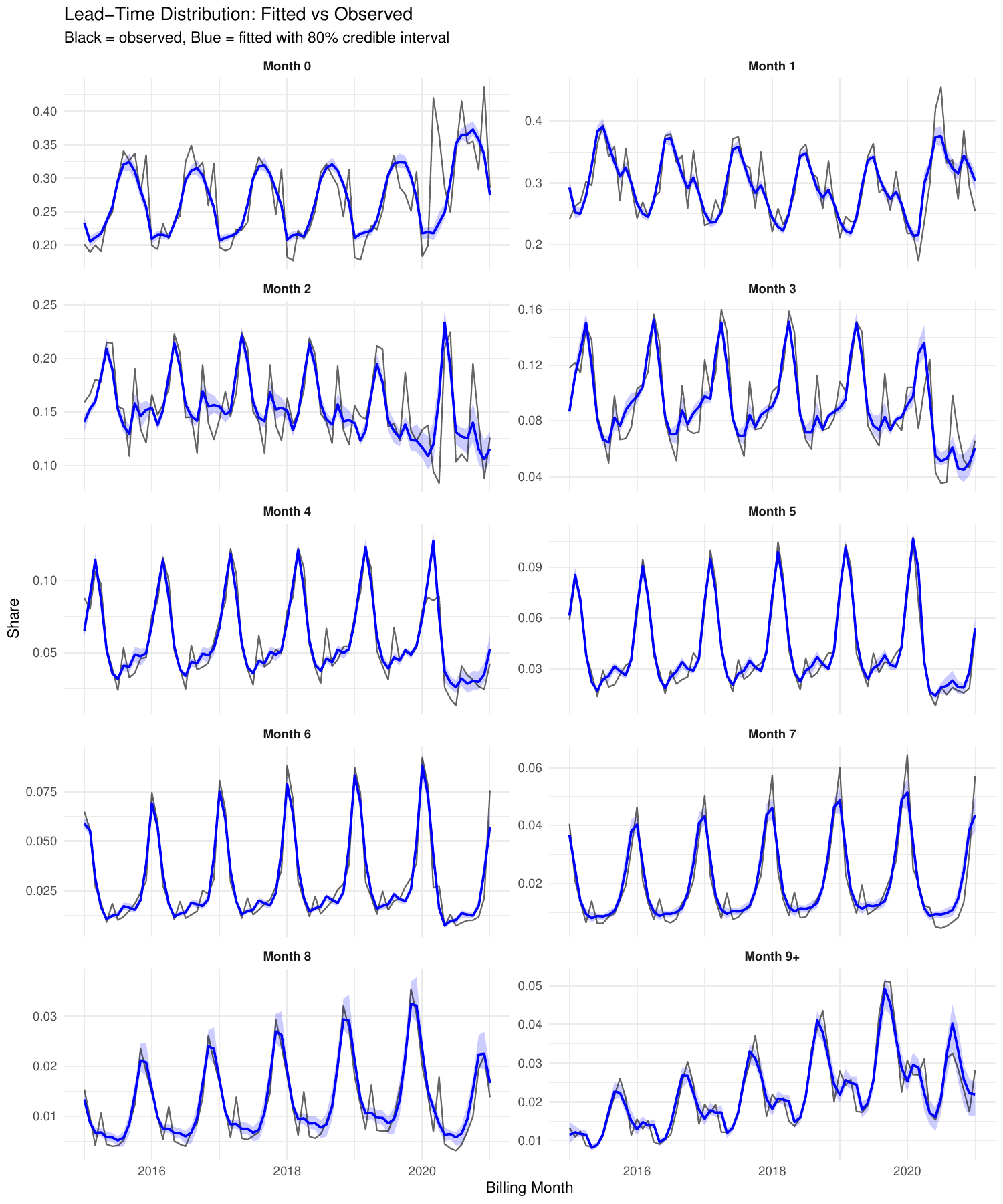}
\caption{APAC}
\label{fig:apac_fitted}
\end{subfigure}
\caption{Intervention model fitted values (blue) versus observed lead-time shares (black) for EMEA, LATAM, and APAC regions. Shaded bands show 80\% credible intervals. The model captures region-specific seasonal patterns and the COVID-19 structural shift across all markets.}
\label{fig:regional_fitted}
\end{figure}

\section{Stay-Length Application: APAC Results}
\label{app:los_apac}

Table~\ref{tab:los_apac} reports 1-step-ahead results for APAC, excluded from the main stay-length analysis due to the absence of a meaningful COVID-19 structural break in the stay-length composition. The intervention model achieves noticeably better point accuracy than the other two (Aitchison 0.203 vs.\ 0.242 for fixed effect and 0.261 for baseline) despite the absence of an obvious COVID arc in APAC, suggesting some residual directional signal in the composition. Coverage is similar across models (85--90\%), reflecting broad uncertainty about a series without a strong structural break rather than model-specific calibration properties.

\begin{table}[htbp]
\centering
\caption{Stay-length application: APAC 1-step-ahead results (appendix only). APAC is excluded from pooled main results due to minimal COVID signal.}
\label{tab:los_apac}
\begin{tabular}{lccccc}
\toprule
Model & Aitchison & Energy & Plug-in & MAE & Coverage \\
 & Distance & Score & Log Score & & (80\%) \\
\midrule
Baseline     & 0.261 & 0.253 & 13.8 & 0.0082 & \textbf{85.7\%} \\
Fixed Effect & 0.242 & 0.230 & 18.7 & 0.0055 & \textbf{85.7\%} \\
Intervention & \textbf{0.203} & \textbf{0.225} & \textbf{22.2} & \textbf{0.0052} & 89.8\% \\
\bottomrule
\end{tabular}
\end{table}

\end{document}